\documentclass{article}
\usepackage{amsmath,amsxtra,amssymb,amsthm,amsfonts}
\usepackage{graphicx} 
\usepackage{color,graphicx}
\usepackage[margin=1.05in]{geometry}
\usepackage[pdftex,linktocpage=true,colorlinks,citecolor=blue,linkcolor=blue,pagebackref]{hyperref}
\usepackage{hyperref}
\usepackage{cleveref}
\usepackage[alphabetic,backrefs]{amsrefs}
\usepackage{algorithm}
\usepackage{algpseudocode}

\numberwithin{equation}{section}
\newtheorem{lemma}{Lemma}[section]
\newtheorem{prop}[lemma]{Proposition}
\newtheorem{theorem}[lemma]{Theorem}

\newtheorem{ex}[lemma]{Example}
\newtheorem{observation}[lemma]{Observation}

\newcommand{\zero}{\mathbf 0}
\newcommand{\uno}{\mathbf 1}
\newcommand{\be}{\mathbf e}

\newcommand{\bx}{\mathbf x}
\newcommand{\bu}{\mathbf u}
\newcommand{\bv}{\mathbf v}

\newcommand{\by}{\mathbf y}
\newcommand{\bw}{\mathbf w}

\newcommand{\bmu}{{\boldsymbol{\mu}}}

\newcommand{\supp}{{\Phi}}

\def\C{\mathbb{C}}
\def\R{\mathbb{R}}
\def\Z{\mathbb{Z}}

\def\ii{\mathrm{i}}

\usepackage[hang,flushmargin]{footmisc}

\begin{document}

\title{Perfect $(s,r)$-state transfer}

\author{ 
	Stephen Kirkland,\textsuperscript{1} \ Hermie Monterde,\textsuperscript{1,2} and  Sarah Plosker\textsuperscript{1,3}
}

\maketitle

\begin{abstract}
Much work has been done in the last two decades on the topic of quantum state transfer in a quantum spin network. One can model such a system of interacting qubits using an undirected graph, and studying vertex-to-vertex dynamics. This setup has recently been relaxed to allow for dynamics between linear combinations of two vertex states, i.e.\ from $\bu = \be_a + s \be_b$ to $\bmu=\be_{\alpha} + r \be_{\beta}$, where $r=s$ is either $-1$ (which corresponds to pair state transfer) or $+1$ (which corresponds to plus state transfer), or more recently $r=s$ is taken to be any real number (which corresponds to $s$-pair state transfer). 

Here, we broaden the investigation of $s$-pair state transfer to \textit{perfect $(s,r)$-state transfer}, which is perfect state transfer from $\bu = \be_a + s \be_b$ to $\bmu=\be_{\alpha} + r \be_{\beta}$ (up to some dilation) where $r,s\in \C$. We identify infinite families of graphs with perfect $(s,r)$-state transfer and provide characterizations of cases when $|r|= |s|$ and when $|r|\neq |s|$, showing situations when the degree of entanglement between vertex states is preserved, and when it is not preserved. The latter is particularly important as it represents perfect state transfer from an entangled pair of qubits to another one where the degree of entanglement need not be the same\mdash in fact, it can be set up so as to ``boost'' (increase) entanglement. We provide an algorithm that finds the vector with two nonzero entries that maximizes the fidelity of transfer for a fixed time $t$ starting from a given $s$-pair state $\bu$. Finally, we provide a sensitivity analysis, with respect to readout time errors, of perfect $(s,r)$-state transfer.

\medskip

\noindent \textbf{Keywords:} quantum walk, perfect state transfer, $s$-pair state, adjacency matrix, Laplacian matrix\\
	
\noindent \textbf{MSC2010 Classification:} 
05C50; 
81P45; 
05C76; 
15A18; 
81Q10 
 
\end{abstract}

\addtocounter{footnote}{1}
\footnotetext{Department of Mathematics, University of Manitoba, Winnipeg, MB, Canada R3T 2N2}
\addtocounter{footnote}{1}
\footnotetext{Department of Mathematics \& Statistics, University of Regina, Regina, SK, Canada  S4S 0A2
}
\addtocounter{footnote}{1}
\footnotetext{Department of Mathematics \& Computer Science, Brandon University, Brandon, MB, Canada R7A 6A9} 

{\def\thefootnote{}\footnotetext{Email: {\tt Stephen.Kirkland@umanitoba.ca, Hermie.Monterde@uregina.ca, ploskers@brandonu.ca}}

\section{Introduction}
The accurate transfer of quantum data, including quantum entanglement, between locations is a fundamental task in quantum information systems. 
A continuous-time quantum walk on a graph $X$ is described by a time-dependent transition operator $U(t)=\exp(itH)$, where $H$ is the  Hamiltonian describing the system whereby a 
quantum state evolves through the vertices of a graph. See \cite{kadian2021quantum} for a systematic review of quantum walks and their applications. Quantum walks through decision trees were initially considered in \cite{Farhi1998}, and this prompted a new class of quantum algorithms. By exploiting the interference
effects of quantum mechanics,  quantum walks can  outperform their classical counterparts (see \cite{childs2003, childs2007, childs2009},  and the references therein). On the other hand, transport
problems in quantum spin networks, specifically quantum spin chains, and the
fidelity of quantum state transfer, were considered in \cite{Bose:Quantum}. In this setting, the vertices of the graph are spins and the edges of the graph connect
spins which interact. The graph-theoretic distance between two vertices serves as a proxy for the physical distance between the locations of the two quantum spins. Over the past two decades, much work has been done on vertex state transfer (see \cite{christandl2004, christandl2005, godsil2012can, godsil2012state, kay2010perfect}). 

More recently, quantum state transfer between certain linear combinations of two vectors corresponding to vertices has been considered \cite{Chen2020PairST}: perfect \emph{pair state transfer} between  states of the form $\bu=\be_a- \be_b$ and $\bmu = \be_{\alpha} -  \be_{\beta}$ as well as perfect \emph{plus state transfer} between states of the form $\bu=\be_a+ \be_b$ and $\bmu = \be_{\alpha} +  \be_{\beta}$, both where the Hamiltonian is taken to the  Laplacian matrix of the graph of interest. For connected graphs on $5\leq n\leq 8$ vertices, it was shown that perfect pair state transfer occurs more often than perfect vertex state transfer for each such $n$, providing motivation for the study of these states. Note that the vertices involved in this type of transfer need not be adjacent, although they can be \cite{chen2019edge}. Perfect pair state transfer with respect to the adjacency matrix has been considered in \cite{Pal2024}. More generally, PST between states $\bu=\be_a+s \be_b$ and $\bmu = \be_{\alpha} + s \be_{\beta}$ when $s$ is real, called \emph{$s$-pair state transfer}, was explored in \cite{kim2024generalization} and recently in \cite{monterde2025perfect}.

There is some very recent literature on the subject of pure state transfer \cite{godsil2025perfect}, that performs spectral analysis on general real-valued unit vectors to characterize PST for these types of quantum states, and group state transfer \cite{brown2023continuous}, whereby the quantum spin network exhibits group state transfer if there exists two subsets of spins $S$ and $T$ for which submatrix of the time-dependent transition matrix $U(t)$ obtained by restricting to
columns in $S$ and rows not in $T$ is the all-zero matrix. These settings are quite general, and in many practical situations (e.g., \cite{chapman2016experimental}) one is interested in implementing
a perfect state transfer protocol applied to a photonic qubit entangled with another qubit at
a different location. The situation of quantum state transfer between two entangled quantum states is precisely the situation of $s$-pair state transfer discussed herein.

In this work, we develop the theory of $s$-pair state transfer. The literature in this area assumes that the constant $s$ is the same in both of the states, and so the term \emph{$s$-pair state transfer} was appropriate. In our more general setting where we allow for states $\bu=\be_a+r \be_b$ and $\bmu = \be_{\alpha} + s \be_{\beta}$, i.e., PST from an $s$-pair state to an $r$-pair state where $r, s\in \mathbb C$ , we use the term ``perfect $(s,r)$-state transfer".
Given the infancy of the topic, part of our contribution is to provide illustrative examples, outline key questions,  and make connections to provide better insight into the topic. We generalize from the real to the  complex setting,  consider examples of perfect $s$-pair state transfer, and provide optimality and sensitivity results in the absence of perfect $s$-pair state transfer. 
Section~\ref{sec:back} covers the necessary mathematical background while Section~\ref{sec:sPST} characterizes PST involving $s$-pair states in complete graphs and complete bipartite graphs. Section~\ref{sec:FR} uses known examples of fractional revival to create new examples of $s$-pair state transfer. Section~\ref{sec:optim} is devoted to the situation where perfect $(s,r)$-state transfer is not possible, but one desires to maximize the quantum state transfer. We provide an algorithm that finds, for a given graph, $s$-pair state $\bu$ and readout time $t,$ the vector with two nonzero entries that maximizes the fidelity of transfer at time $t$ starting from $\bu.$ Section~\ref{sec:sens} undertakes a sensitivity analysis of the fidelity (with respect to the readout time) in the presence of perfect $(s,r)$-state transfer, while Section~\ref{sec:concl} includes concluding remarks. 

Our results on PST for $s$-pair states are of two types: PST from $\be_a + s\be_b$ to
 $\be_\alpha + r\be_\beta$ where $|s|=|r|$ and PST from $\be_a + s\be_b$ to
 $\be_\alpha + r\be_\beta$, where $|s|$ may not be the same as $|r|.$ In the former case the degree of entanglement is preserved, and we mark those examples with a $*$. In the latter case the degree of entanglement may not be preserved and we denote  those examples with a $\dagger$. 

\section{Mathematical background}\label{sec:back}

Unless explicitly stated, we assume that $X$ is a simple (loopless) connected undirected weighted graph on $n$ vertices. We say that $X$ is unweighted if all edges in $X$ have weight equal to one. Suppose $H$ is a Hermitian matrix associated with $X$, that is, $H_{u,v}\neq 0$ if and only if there is an edge between $u$ and $v$ in $X$. The {\sl continuous-time quantum walk on $X$ with Hamiltonian $H$} has \textit{transition matrix}
\begin{equation*}
U_H(t) :=e^{\ii t H}.
\end{equation*}
Note that $U_H(t)$ is unitary since $H$ is Hermitian. Moreover, if $H$ is real symmetric (as is the case for the majority of this paper), then $U_H(t)$ is symmetric. Let $\operatorname{spec}(H)=\{\lambda_1,\ldots,\lambda_d\}$ denote the set of distinct eigenvalues of $H$.  If the spectral decomposition $H$ is given by
\begin{equation*}
H=\sum_{j=0}^d\lambda_jE_j,
\end{equation*}
where $E_j$ is the orthogonal projection matrix onto the eigenspace associated with $\lambda_j$, then we may write
\begin{equation*}
U_H(t)=\sum_{j=0}^de^{\ii t\lambda_j}E_j.
\end{equation*}
In this paper, we focus solely on the single excitation subspace of the Hamiltonian $H$; this subspace is characterised by the excitation number---
the number of 1's in the basis states $|0\rangle,  \dots, |n-1\rangle$---being equal to one \cite{keele2022combating}. Although the state of the quantum system as a whole is a unit vector in $(\C^2)^{\otimes n}$, the single-excitation framework reduces our study to $\C^n$ (see, e.g., \cite{chan2019quantum}). We also note that some results in this paper apply specifically to $H$ being the adjacency matrix $A=A(X)$ of $X$, or the Laplacian matrix $L=L(X)$. If we do not specify the Hamiltonian in a statement by using $U(t)$ to denote the
transition matrix, then one may assume it is any real symmetric matrix associated with $X$. In this case, the transition matrix is a symmetric unitary matrix.

A {\sl pure state} on a graph $X$ is a $1$-dimensional subspace of $\C^{n}$. We represent a pure state by a unit vector $\bu$ spanning the $1$-dimensional subspace. Note that $\gamma\bu$ represents the same state as $\bu$, for any phase factor $\gamma$. A pure state $\bu$ is a {\sl real state} if all its entries are real. A {\sl vertex state} is the characteristic vector $\be_a$ of some vertex $a$ in $X$.
For a non-zero complex number $s$, an {\sl $s$-pair state} is a pure state of the form
\begin{equation*}
\frac{1}{\sqrt{1+\vert s\vert^2}}\left(\be_a+s \be_b\right)
\end{equation*}
where $a$ and $b$ are two vertices in $X$ that need not be adjacent. In this work, we sometimes keep with the convention in the literature \cite{Chen2020PairST} of dropping the normalization factor for convenience. So we may simply refer to an $s$-pair state as a pure state of the form $\be_a+s\be_b$.  A $(-1)$-pair state is called {\sl pair state} if $s=-1$, and a $(1)$-pair state is called {\sl plus state}. 
An $s$-pair state 
represents a pair of entangled qubits forming a state in the 1-excitation subspace $\mathbb C^n$ of the full $2^n$-dimensional system of $n$ spins \cite{kim2024generalization}. 

Throughout, we let $\bu$ and $\bmu$ be linearly independent unit vectors in $\mathbb C^n$. We say that \textit{perfect state transfer} (PST) occurs from a pure state $\bu$ to a pure state $\bmu$ at time $\tau>0$ if
\begin{equation}
\label{PST}
U(\tau)\bu=\gamma\bmu.
\end{equation}
for some $\gamma\in\C$ called \textit{phase factor}. Since $U(t)$ is unitary, $|\gamma|=1$. If $\bu$ and $\bmu$ above are vertex states, then we also say \textit{perfect vertex state transfer} occurs between vertices $u$ and $v$ in $X$, while if $\bu$ is an $s$-pair state and $\bmu$ is an $r$-pair state, then we also say \textit{perfect $(s,r)$-state transfer}  occurs from $\bu$ to $\bmu$ ($s=r$ corresponds to \textit{perfect $s$-pair state transfer} in the literature). In particular, $(-1)$-pair state transfer is also called \textit{pair state transfer}, while $(+1)$-pair state transfer is also called \textit{plus state transfer}. If (\ref{PST}) holds but $\bu$ and $\bmu$ are linearly dependent, then we say that $\bu$ is periodic at time $\tau$. If every vertex state in $X$ is periodic, then we say that $X$ is a \textit{periodic graph}, in which case $U(\tau)=\gamma I$ \cite{godsil2011per}.

Suppose PST occurs from $\bu=\alpha\be_a+\beta\be_b$ to $\bmu=\eta\be_c+\zeta\be_d$ where $|\alpha|^2+|\beta|^2=|\eta|^2+|\zeta|^2=1$ and $\alpha,\beta,\eta,\zeta\in\C\backslash\{0\}$. Set $s=\frac{\beta}{\alpha}$ and $r=\frac{\zeta}{\eta}$. Then we may write (\ref{PST}) as
\begin{eqnarray}
\label{eq:pstsr}
U(\tau)(\be_a+s\be_b)=\gamma\frac{\eta}{\alpha}(\be_c+r\be_d)
\end{eqnarray}
for some $\gamma\in\C$ with $|\gamma|=1$. Consequently, PST from a pure state with exactly two nonzero entries to another pure state with the same property may be recast as PST from an $s$-pair state to a dilation of an $r$-pair state, where the dilation factor is $|\eta/\alpha|$. 

Physically speaking, an $s$-pair state $\be_a+s\be_b$ represents a pair of entangled qubits \cite{kim2024generalization}. The parameter $s$ may be viewed as a measure of the degree of entanglement of the qubits $a$ and $b$ in the $s$-pair state represented by $\be_a+s\be_b$. Consequently, in (\ref{eq:pstsr}), we have $|\eta/\alpha|=1$ if and only if $|s|=|r|$, and so the dilation factor $|\eta/\alpha|$ is an indicator of whether the entanglement has been preserved whenever perfect $(s,r)$-state transfer occurs. Given this interpretation, it follows that perfect $(s,r)$-state transfer (that is,
PST from an $s$-pair state to an $r$-pair state) represents a mapping of entangled states to entangled states. In particular, PST from an $s$-pair state to an $r$-pair state with $|s|=|r|$ can be viewed as PST on entangled states that have the same degree of entanglement. Perfect $(s,r)$-state transfer where the entanglement increases or decreases depends on the relative closeness of $|r|$ and $|s|$ to 1, which represents maximal entanglement. To clarify this point, we make the following definition. 
For an $s$-pair state $\be_a+s\be_b$, let 
\begin{eqnarray*}
m(s)\ :=\ \min\big\{|s|, |s|^{-1}\big\}.
\end{eqnarray*}
Note that $m(s)$ measures the degree of entanglement between the qubits $a$ and $b$ in the $s$-pair state $\be_a+s\be_b$. Observe that $m(s) \in [0,1]$, with $m(s)=1$ if and only if  the states $a$ and $b$ are maximally entangled. Consequently, pair and plus states are maximally entangled. The \textit{Bell states}, which are pure states in $\C^4$ of the form $\frac{1}{\sqrt{2}}(\be_1\pm \be_4),\frac{1}{\sqrt{2}}(\be_2\pm \be_3)$, are examples of pair and plus states. Moreover, we note that if $m(s)$ is close to $0$, then either $|s|$ is very small, or $|s|$ is very large. In the former case the normalized version $\frac1{\sqrt{1+|s|^2}} \be_a + \frac{s}{\sqrt{1+|s|^2}}\be_b$ is close to $\be_a$, while in the latter case $\frac1{\sqrt{1+|s|^2}} \be_a + \frac{s}{\sqrt{1+|s|^2}}\be_b$ is close to $\be_b$.
If we use $m(s)$ as a proxy for the degree of entanglement, then we can say that for perfect $(s,r)$-state transfer, the degree of entanglement is increased provided that $m(s) < m(r)$. 

From the equation $U(\tau)\bu=\gamma\bmu$, if $\bu$ and $\bmu$ are real-valued, one has reversibility (symmetry) of PST in the sense that if there is PST from $\bu$ to $\bmu$ then there is PST from $\bmu$ to $\bu$ at the same time and the same phase factor \cite[Cor 5.3]{godsil2017real}. For certain complex pure states, PST is reversible, but not in general  (\cite{kim2024generalization} provide examples where perfect $s$-pair state transfer is equivalent to perfect (vertex) state transfer, and it is well-known that vertex PST is reversible). However, it is the case that we always get  either: if there is PST from $\bu$ to $\bmu$  (at time $\tau$) then there is PST from $\overline{\bmu}$ to $\overline{\bu}$ (at time $\tau$); or if there is PST from $\bmu$ to $\bu$ (at time $t$) then there is PST from $\bmu$ to $\bu$  at time $\mathbf{-\tau}$. Reversing the sign of $t$ maps $U$ to its complex conjugate transpose: $U(-t)=U^*(t)$, and therefore the inverse transformation is being applied to the adjoint states. Both $U$ and $U^*$ are unitary operators, so neither is more physically meaningful than the other. Furthermore, the natural laws of physics are time-symmetric; the dynamics of the system are driven by the Schr\"{o}dinger equation 
\(\ii\hbar \frac{d}{dt} \vert{}\psi(t)\rangle = H \vert{}\psi(t)\rangle\), and there is no objective way to distinguish if a quantum particle is moving forward or backward in time. In short, negative time is indeed physically meaningful. See \cite{hoover2012time, sachs1987physics} for a more fulsome  discussion of time reversal.

The complex setting is  distinctly different from the real setting. The following provides an example of a 4-vertex graph where there is perfect $(s,r)$-state transfer from $\bu$ to $\bmu$ at time $\tau>0$, but no perfect $(s,r)$-state transfer from $\bmu$ to $\bu$ at any positive time. 
\begin{ex}\label{ex:p4}$\dagger$ {\rm{
    Consider the adjacency matrix of the path $P_4$ with end vertices $1$ and $4$. 
    Let $\bu=
    \frac{1}{\sqrt{2}}(\be_1+\ii\be_4)$. Let $\tau= \frac{2\pi}{\sqrt{5}}$ and $\theta = \pi\left(\frac{5+\sqrt{5}}{5}\right),$ and set $r=\frac{\sin(\theta+\frac{\pi}{4})}{\cos(\theta+\frac{\pi}{4})}\ii$. A computation shows that 
\begin{equation}
\label{p4}
U(\tau)\bu=\frac{1}{\sqrt{2}} 
\big(
  \cos \theta - \sin \theta\big)\be_1+\ii\big(  \cos \theta + \sin \theta\big)\be_4= 
    \cos\bigg(\theta + \frac{\pi}{4}\bigg)\be_1+\ii\sin\bigg(\theta + \frac{\pi}{4}\bigg)\be_4
=\frac{1}{\sqrt{1+|r|^2}}(\be_1+r\be_4),
\end{equation}
so that there is perfect $(s,r)$-state transfer from $\bu$ to $\bmu:=\frac{1}{\sqrt{1+|r|^2}}(\be_1+r\be_4)$ at time $\tau$, where 
$\bu$ is maximally entangled, but $\bmu$ is not since $|r|\ne 1$. This is an example where the degree of entanglement of the state of the system has decreased under perfect $(s,r)$-state transfer

Suppose that for some $\tau'>0$ we have perfect $(s,r)$-state transfer from $\bmu$ back to $\bu$. Then we get $U(\tau+\tau')\bu = \gamma \bu$ for some unit complex number $\gamma$.  The eigenvalues of $A$ are $\frac{\pm 1 \pm \sqrt{5}}{2}$ and we denote them by $\lambda_1 > ... > \lambda_4$. Since $U(\tau+\tau')\bu = \gamma \bu$, we get that
$e^{\ii(\tau+\tau')(\lambda_1-\lambda_j)}=1$ for $j=2,3,4$. Taking $j=2$ yields $\tau+\tau' = 2 \pi k$ for some positive integer $k$, and $j=3$ yields  $(\tau+\tau')\sqrt{5} = 2 \pi m$ for some positive integer $m$, a contradiction. Thus no such $\tau'>0$ exists. }}

{\rm{We note in passing that, by taking complex conjugates in (\ref{p4}), it is readily established that $U(\tau) \overline{\bmu}=\overline{\bu}.$ Note that $\overline{\bu}$ is still maximally entangled, whereas $\overline{\bmu}$ is not. This thus yields an example where the degree of entanglement 
of the state of the system has increased under perfect $(s,r)$-state transfer. 
 }}
\end{ex}

\begin{ex}\label{ex:p4'} $\dagger$ \rm{
Let us revisit Example \ref{ex:p4}. For $k \in \mathbb{N},$ let $\tau_k= \frac{2\pi k}{\sqrt{5}}$ and $\theta_k = \pi k\left(\frac{5+\sqrt{5}}{5}\right). $ Note that $$U(\tau_k) \bu = 
    \cos\bigg(\theta_k + \frac{\pi}{4}\bigg)\be_1+\ii\sin\bigg(\theta_k + \frac{\pi}{4}\bigg)\be_4\equiv \bmu_k $$ for each such $k$, so that there is PST from $\bu$ to $\bmu_k$ at time $\tau_k$ for each $k \in \mathbb{N}.$ 

Fix a value $\alpha \in [0,\pi].$ Invoking Kronecker's approximation theorem, we find that for each $\epsilon >0,$ there  are $k, h \in \mathbb{N}$ such that $\Big|k\left(\frac{5+\sqrt{5}}{10}\right) -h + \frac{1}{8}-\frac{\alpha}{2\pi}\Big| < \frac{\epsilon}{2\pi},$ and hence $\Big| \theta_k+\frac{\pi}{4} - 2\pi h -\alpha \Big| < \epsilon.$ Consequently, there is a subsequence $k_j \in \mathbb{N}$ such that $\cos(\theta_{k_j}+\frac{\pi}{4}) \rightarrow \cos(\alpha)$ and  $\sin(\theta_{k_j}+\frac{\pi}{4}) \rightarrow \sin(\alpha)$ as $j \rightarrow \infty.$ Consequently,  $$\lim_{j \rightarrow \infty} \bmu_{k_j} =\cos(\alpha)\be_1+\ii\sin(\alpha)\be_4.$$
Hence, not only is there PST from $\bu$ to $\bmu_k$ at time $\tau_k$ for each $k \in \mathbb{N},$ but also, via a suitable choice of $k,$ the degree of entanglement in $\bmu_k$ can be made arbitrarily close to any prespecified degree of entanglement.}
\end{ex}

It should be noted that if the degree of entanglement decreases under perfect $(s,r)$-state transfer, one can reverse the time as described earlier to force the entanglement to increase. That is, so long as the degree of entanglement is not preserved, i.e., $m(s)\neq m(r)$, then we can reverse time so that the degree of entanglement is increased ($m(s) < m(r)$). This entanglement boosting effect may be of physical interest, as entanglement is a key quantum resource. 
\section{Perfect $(s, r)$-state transfer}\label{sec:sPST}

Let  $\bu$ and $\bmu$ be complex vectors. The eigenvalue support of $\bu$ (relative to $H$) is the set
\begin{equation*}
\supp_{\bu}=\{\lambda_j\in\operatorname{spec}(H):E_j\bu\neq \zero\}.
\end{equation*}
We say that $\bu$ is a \textit{fixed state} if $|\supp_{\bu}|=1$. It is known that $\bu$ is a fixed state if and only if it is an eigenvector for $H$. A fixed state is not involved in PST \cite{godsil2025perfect,monterde2025quantum}. We say that $\bu$ and $\bmu$ are strongly cospectral (relative to $H$) if for each $j$, there is a unit complex numbers $\gamma_j$ such that
\begin{equation*}
E_j\bu=\gamma_jE_j\bmu.
\end{equation*} 
Note that if $\bu$ and $\bmu$ are real vectors, then each $\gamma_j\in\{\pm 1\}$. Analogous to the vertex PST setting, we have the following observation (for proof, see \cite[Thm 4.1.3]{monterde2025quantum}). 

\begin{observation}
\label{Thm:charPST}
Let $\bu,\bmu\in \C^m$ and $H$ be a real symmetric matrix. PST occurs from $\bu$ to $\bmu$ relative to $H$ if and only if the following conditions hold.
\begin{enumerate}
\item The complex vectors $\bu$ and $\bmu$ are strongly cospectral. That is, for each $j$, we have $E_j\bu=\gamma_jE_j\bmu$ for some unit $\gamma_j\in\C$.
\item There exists $\tau>0$ such that for any two eigenvalues $\lambda_j,\lambda_h\in\supp_{\bu}=\supp_{\bmu}$, we have $\gamma_je^{\ii\tau\lambda_j}=\gamma_h e^{\ii\tau\lambda_h}.$
\end{enumerate}
\end{observation}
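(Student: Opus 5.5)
The proof works entirely through the spectral decomposition $U(t)=\sum_j e^{\ii t\lambda_j}E_j$. Since the $E_j$ are orthogonal projections summing to $I$ and satisfying $E_jE_h=\delta_{jh}E_j$, the single vector equation $U(\tau)\bu=\gamma\bmu$ is equivalent to the family of projected equations $E_jU(\tau)\bu=\gamma E_j\bmu$, one for each $j$. Because $E_jU(\tau)=e^{\ii\tau\lambda_j}E_j$, these read
\begin{equation*}
e^{\ii\tau\lambda_j}E_j\bu=\gamma E_j\bmu \qquad\text{for all } j .
\end{equation*}
Both directions of the observation are obtained by reading this family of equations in opposite directions.

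\emph{Necessity.} Assume PST holds at time $\tau$ with phase $\gamma$, so $|\gamma|=1$. From the displayed equations, $E_j\bu=\gamma e^{-\ii\tau\lambda_j}E_j\bmu$. Set $\gamma_j:=\gamma e^{-\ii\tau\lambda_j}$; this is a unit complex number, so condition (1) holds. It also shows $E_j\bu=\zero$ if and only if $E_j\bmu=\zero$, hence $\supp_{\bu}=\supp_{\bmu}$. For every $\lambda_j$ in the support we then have $\gamma_je^{\ii\tau\lambda_j}=\gamma$, a value independent of $j$, which gives condition (2).

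\emph{Sufficiency.} Assume (1) and (2) hold. By (1), $E_j\bu=\zero$ exactly when $E_j\bmu=\zero$, so the two eigenvalue supports coincide. By (2), the common value $\gamma:=\gamma_je^{\ii\tau\lambda_j}$ over $\lambda_j\in\supp_{\bu}$ is a well-defined unit complex number. We compute
\begin{equation*}
U(\tau)\bu=\sum_{\lambda_j\in\supp_{\bu}}e^{\ii\tau\lambda_j}E_j\bu=\sum_{\lambda_j\in\supp_{\bu}}e^{\ii\tau\lambda_j}\gamma_jE_j\bmu=\gamma\sum_{\lambda_j\in\supp_{\bmu}}E_j\bmu=\gamma\bmu .
\end{equation*}
The last step uses $\sum_jE_j=I$ together with $E_j\bmu=\zero$ off the support.

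The only point that needs care is choosing the $\gamma_j$ for eigenvalues outside the support. There both projections vanish, so $\gamma_j$ can be any unit number. In the necessity direction we fix it as $\gamma e^{-\ii\tau\lambda_j}$ for definiteness. In the sufficiency direction condition (2) only constrains eigenvalues in the support, so these choices do not matter. Note that the real symmetry of $H$ is not used here; only the spectral decomposition with orthogonal projections, which holds for any Hermitian $H$, is needed.
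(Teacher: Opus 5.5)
Your proof is correct and follows the standard argument: the paper gives no proof of its own but defers to \cite[Thm 4.1.3]{monterde2025quantum}, where the characterization is obtained as you do it. One applies each $E_j$ to $U(\tau)\bu=\gamma\bmu$, obtains $e^{\ii\tau\lambda_j}E_j\bu=\gamma E_j\bmu$ for all $j$, and reads this family of equations in both directions. The only thing left implicit in your sufficiency direction is the paper's standing assumption that $\bu$ and $\bmu$ are linearly independent unit vectors. That assumption is what makes the resulting identity $U(\tau)\bu=\gamma\bmu$ count as PST rather than periodicity.
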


Our goal in this section is to characterize the existence of perfect $(s,r)$-state transfer in complete graphs and complete bipartite graphs relative to $A$ and $L$.


\subsection{Complete graphs}

The spectral decomposition of the transition matrix of $K_n$ relative to $A$ is given by 
\begin{equation}
\label{Eq:kn}
U_A(t)=e^{-\ii t}\left((e^{\ii tn}-1)\frac{1}{n}J+I\right).
\end{equation}

\begin{prop}
\label{Cor:knPST}
Let $V(K_n)=\{1,\ldots,n\}$. Suppose $\bu=\be_1+s\be_2$ with $s\in\C\backslash\{0,-1\}$ and $\bmu\in\C^n$ is a vector that has two nonzero entries. PST occurs from $\bu$ to $\bmu$ in $K_n$ relative to $A$ at time $\tau$ if and only one of the following conditions hold.
\begin{enumerate}
\item $\dagger$  $n=2$, $s\in\C\backslash\{0\}$, $\bmu=\frac{e^{\ii2\tau}+1+s(e^{\ii2\tau}-1)}{2}\be_1+\frac{e^{\ii2\tau}-1+s(e^{\ii2\tau}+1)}{2}\be_2$, and $\tau\in\R$.
\item $*$ $n=3$, $s=\frac{2+e^{\ii3\tau}}{1-e^{\ii 3\tau}}$, $\bmu=\be_3+(s-1)\be_2$, and $\tau\in\R$.
\item $*$ $n=3$, $s=\frac{1-e^{\ii 3\tau}}{2+e^{\ii 3\tau}}$, $\bmu=s\be_3+(1-s)\be_1$, and $\tau\in\R$.
\item $*$ $n=4$, $s=1$, $\bmu=\be_3+\be_4$ and $\tau=\frac{\pi}{4}$.
\end{enumerate}
\end{prop}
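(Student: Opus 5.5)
The plan is to compute $U_A(\tau)\bu$ directly from (\ref{Eq:kn}) and then decide when the result has exactly two nonzero entries. Put $\omega=e^{\ii\tau n}$ and $c=(\omega-1)/n$. Since $J\bu=(1+s)\uno$, we get
\begin{equation*}
e^{\ii\tau}U_A(\tau)\bu=(1+c(1+s))\be_1+(s+c(1+s))\be_2+c(1+s)\sum_{j\ge3}\be_j .
\end{equation*}
PST to a vector $\bmu$ with two nonzero entries means that this vector has exactly two nonzero entries, with $\bmu$ then read off up to the phase $e^{-\ii\tau}$. So it suffices to count the zero entries. The entries indexed by $j\ge 3$ all equal $c(1+s)$. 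Because $s\ne -1$, these vanish exactly when $c=0$, that is, when $\omega=1$. In that case $U_A(\tau)\bu$ is a multiple of $\bu$, so $\bu$ is periodic rather than transferred, and this is excluded by the linear-independence convention. Hence $c\neq0$ and all $n-2$ entries $j\geq 3$ are nonzero.

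I would then split into cases on $n$.

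\textbf{Case $n=2$.} There are no entries with $j\ge3$. The two remaining entries are $(\omega+1+s(\omega-1))/2$ and $(\omega-1+s(\omega+1))/2$, and the formula for $\bmu$ in item 1 is exactly these. One must note that neither entry can vanish when $\bu,\bmu$ are independent, beyond generic $\tau$; this is the bookkeeping for item 1.

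\textbf{Case $n=3$.} There is one nonzero entry at $j=3$, so exactly one of the first two entries must vanish.
\begin{itemize}
\item If $s+c(1+s)=0$, solving gives $s=-c/(1+c)$. With $c=(\omega-1)/3$ this is $s=(1-\omega)/(2+\omega)$, which is item 3. Then entry 1 equals $1-s$ and entry 3 equals $s$, after substituting $c(1+s)=-s$.
\item If $1+c(1+s)=0$, then $s=-(1+c)/c=(2+\omega)/(1-\omega)$, and $\bmu$ is proportional to $\be_3+(s-1)\be_2$, which is item 2.
\end{itemize}
The $*$ labels, i.e. $|s-1|=|s|$ and $|1-s|=|s|$ respectively, follow from $|\omega|=1$ via the identity $|2+\omega|=|1+\bar\omega+1|$. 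Concretely, $s-1=(1+2\omega)/(1-\omega)$, and $|1+2\omega|=|2+\omega|$ since $|\bar\omega|=1$.

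\textbf{Case $n=4$.} There are already two nonzero entries at $j=3,4$, so both $1+c(1+s)=0$ and $s+c(1+s)=0$ are needed. Subtracting gives $s=1$. Then $c=-1/2$, so $\omega=-1$, i.e. $e^{4\ii\tau}=-1$. This gives $\tau=\pi/4$ as the least positive time, or $\pi/4+k\pi/2$ in general, and $\bmu=\be_3+\be_4$.

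\textbf{Case $n\ge5$.} There are at least three nonzero entries, so PST is impossible.

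The main obstacle is not computational but careful handling of the degenerate cases. For $n=2,3$ one must check that the excluded values ($\omega=1$, and $1+c=0$, i.e. $\omega=-2$, which is impossible) do not produce spurious solutions. One must also reconcile the statement's $\tau\in\R$ with the requirement $\tau>0$ and with the case $\omega=1$. I would address this by noting that the formulas for $s$ in items 2 and 3 are undefined or degenerate precisely there.
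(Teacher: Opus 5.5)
Your route is the same as the paper's: expand $U_A(\tau)\bu$ via (\ref{Eq:kn}), note that every entry indexed by $j\ge 3$ equals $c(1+s)$, discard $\omega=1$ as periodicity, and split on $n$. The genuine problem is the $n=3$ case. There your arithmetic has sign slips that make the output agree with items 2 and 3 as printed, when in fact it does not.

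\textbf{The sign errors for $n=3$.}
\begin{itemize}
\item If $s+c(1+s)=0$, then entry $3$ is $c(1+s)=-s$, not $s$. So $e^{\ii\tau}U_A(\tau)\bu=(1-s)\be_1-s\be_3=-\big(s\be_3+(s-1)\be_1\big)$.
\item If $1+c(1+s)=0$, then entries $2$ and $3$ are $s-1$ and $-1$. So $e^{\ii\tau}U_A(\tau)\bu=(s-1)\be_2-\be_3=-\big(\be_3+(1-s)\be_2\big)$.
\end{itemize}
Neither vector is a scalar multiple of the one in the statement. Proportionality would force $s=1$, and then $1+2c=0$ gives $\omega=-1/2$, which is not on the unit circle. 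A concrete check: $\omega=-1$ in item 2 gives $s=\tfrac12$, and $U_A(\tau)\bu$ is proportional to $\be_3+\tfrac12\be_2$, not to $\be_3-\tfrac12\be_2$.

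So items 2 and 3 as printed contain sign typos. The paper's own proof derives the targets $\be_3+(1-s)\be_2$ and $s\be_3+(s-1)\be_1$. A correct write-up must derive these corrected vectors and flag the discrepancy, rather than assert agreement. As written, your argument ``proves'' the sufficiency direction of items 2 and 3 in a form that is false.

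\textbf{Smaller points.}
\begin{itemize}
\item For item 3, the relevant identity is $|1-s|=|1+2\omega|/|2+\omega|=1$. This gives $|s/(s-1)|=|s|$, so entanglement is preserved. The identity $|1-s|=|s|$ you wrote does not hold in general; it would make the target maximally entangled.
\item For $n=2$, your assertion that neither entry can vanish when $\bu$ and $\bmu$ are independent is false. The first entry vanishes exactly when $s=(1+\omega)/(1-\omega)=\ii\cot\tau$. For example, $s=\ii$ and $\tau=\pi/4$ send $\bu$ to $e^{-\ii\tau}(\ii-1)\be_2$, a vertex state.
\item Also for $n=2$, $s=1$ makes $\bu$ an eigenvector of $A(K_2)$, so item 1 then gives $\bmu=\omega\bu$, which is linearly dependent on $\bu$.
\end{itemize}
These degenerate cases (together with $\omega=1$) have to be excluded from item 1 explicitly. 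The paper's proof glosses over this as well, but your ``bookkeeping'' sentence claims something untrue rather than handling it.
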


\begin{proof}
Let $\bu=\be_1+s\be_2$, where $s\in\C\backslash\{0\}$. If $\bu$ admits PST in $K_n$, then it must be that $s\neq \pm 1$, otherwise $\bu$ is a fixed state. From (\ref{Eq:kn}), we get PST from $\bu$ to $\bmu$ at time $\tau$ with phase factor $e^{-\ii\tau}$ if and only if 
\begin{center}
$\bmu=(\be_1+s\be_2)+\frac{1}{n}(e^{\ii n \tau}-1)(1+s)\uno.$
\end{center}
Note that $\bmu$ has at least three nonzero entries whenever $n\geq 5$. Thus, it must be $n\in\{2,3,4\}$. Moreover, we have that $e^{\ii n \tau}\neq 1$, otherwise $\bu=\bmu$ is periodic at $\tau$. If $n=2$, then $\bmu=\frac{1}{2}\left[\begin{array}{cc} e^{\ii2\tau}+1+s(e^{\ii2\tau}-1) \\ e^{\ii2\tau}-1+s(e^{\ii2\tau}+1) \end{array} \right]$. If $n=3$, then PST occurs from $\bu$ to $\bmu=\bu+\frac{1}{3}(e^{\ii3\tau }-1)(1+s)\uno$ at time $\tau$.  Now, $\bmu$ has two nonzero entries if and only if either $1+\frac{1}{3}(e^{\ii3\tau }-1)(1+s)=0$ or $s+\frac{1}{3}(e^{\ii3\tau }-1)(1+s)=0$. Equivalently, $s=\frac{2+e^{\ii3\tau}}{1-e^{\ii 3\tau}}$ or $s=\frac{1-e^{\ii 3\tau}}{2+e^{\ii 3\tau}}$. In particular, if $s=\frac{2+e^{\ii3\tau}}{1-e^{\ii 3\tau}}$, then PST occurs from $\bu$ to $\bmu=\be_3+(1-s)\be_2$. Meanwhile, if $s=\frac{1-e^{\ii 3\tau}}{2+e^{\ii 3\tau}}$, PST occurs from $\bu$ to $\bmu=s\be_3+(s-1)\be_1$. Finally, if $n=4$, then $\bmu$ has two nonzero entries if and only if $1+\frac{1}{4}(e^{\ii4\tau }-1)(1+s)=0$ and $s+\frac{1}{4}(e^{\ii4\tau }-1)(1+s)=0$. That is, $s=\frac{3+e^{\ii 4\tau}}{1-e^{\ii 4\tau}}=\frac{1-e^{\ii 4\tau}}{3+e^{\ii 4\tau}}$. Equivalently, $\tau=\frac{\pi}{4}$ and $s=1$. In this case, PST occurs between $\be_1+\be_2$ and $\be_3+\be_4$ in $K_4$ at $\frac{\pi}{4}$.
\end{proof}

In Proposition \ref{Cor:knPST} 1, there are choices of $\tau$ and $s$ so that the degrees of entanglement in $\bu$ and $\bmu$ are different. In Proposition \ref{Cor:knPST} 2, $s-1=-\overline{s}$ so that the degree of entanglement is preserved in the target $s$-pair state. Similarly in Proposition \ref{Cor:knPST} 3, $|1-s|=1$, so the degree of entanglement is again preserved. It is also immediate from Proposition \ref{Cor:knPST} that if $n\geq 5$, then perfect $(s,r)$-state transfer does not occur in $K_n$. This motivates us to study optimization techniques in Section~\ref{sec:optim}.


\subsection{Complete bipartite graphs}

For complete bipartite graphs, we establish results analogous to Corollaries \ref{Cor:knPST} relative to $A$ and $L$. Since $K_2$ is dealt with, we only examine $K_{a,b}$ with $a+b\geq 3$. We start with the case when $H=A$.

\subsubsection{Adjacency case}

The spectral decomposition of the transition matrix of $K_{a,b}$ relative to $A$ is given by
\begin{equation}
\label{Eq:completebip}
U_A(t)=\begin{bmatrix}
I_a-\frac{1}{a}J&0 \\ 0&I_b-\frac{1}{b}J
\end{bmatrix}+\frac{e^{\ii t\sqrt{ab}}}{2ab}\begin{bmatrix}
bJ&\sqrt{ab}J \\ \sqrt{ab}J&aJ
\end{bmatrix}+\frac{e^{-it\sqrt{ab}}}{2ab}\begin{bmatrix}
bJ&-\sqrt{ab}J \\ -\sqrt{ab}J&aJ
\end{bmatrix}.
\end{equation}

\begin{theorem}
\label{Thm:kmnPSTc1}
Let $V(K_{a,b})=B_1\cup B_2$, where $|B_1|=a$. Suppose $\bu=\be_u+s\be_v$ for some $s\in\C\backslash\{0\}$ and $\bmu\in\C^n$ is a vector with two nonzero entries. PST occurs from $\bu$ to $\bmu$ in $K_{a,b}$ relative to $A$ at time $\tau$ if and only if one of the following conditions hold.
\begin{enumerate}
\item $u,v\in B_1$, and one of the conditions below holds.
\begin{enumerate}

\item $*$ $a=2$, $b\geq 1$, $s\neq 1$, $\bmu=-(\be_v+s\be_u)$ and $\tau=\frac{\pi}{\sqrt{2b}}$. 

\item $*$ $a=4$, $b\geq 1$, $s=1$, $\bmu=\be_w+\be_x$ where $B_1=\{u,w,v,x\}$, and $\tau=\frac{\pi}{2\sqrt{b}}$. 

\item $*$ $a=3$, $b\geq 1$, $s=\frac{1}{2}$, $\bmu=\be_w+\frac{1}{2}\be_x$ where $B_1=\{u,w,v\}$, and $\tau=\frac{\pi}{\sqrt{3b}}$. 

\item $\dagger$ $a=2$, $b=1$, $s=\cot^2(\tau/\sqrt{2})
$, $\bmu=(s-1)\be_v+\frac{1+s}{\sqrt{2}}\ii\sin(\tau\sqrt{2})\be_w$ where $B_2=\{w\}$, and $\tau\in\R$.
\item $\dagger$ $a=2$, $b=1$, $s=\tan^2(\tau/\sqrt{2})$, $\bmu=(1-s)\be_u+\frac{1+s}{\sqrt{2}}\ii\sin(\tau\sqrt{2})\be_w$ where $B_2=\{w\}$, and $\tau\in\R$.

\item $*$ $a=3$, $b=1$, $s=1$, $(\bmu,\tau)\in\{\big(-\be_x+\ii\be_w,\frac{2\pi(3k+1)}{3\sqrt{3}}\big),\big(-\be_x-\ii\be_w,\frac{2\pi(3k+2)}{3\sqrt{3}}\big)\}$ where $B_1=\{u,v,x\}$, $B_2=\{w\}$ and $k\in\Z$.

\item $*$ $a=2$, $b=2$, $s=1$, and $\bmu=\ii(\be_w+\be_x)$ where $B_2=\{w,x\}$, and $\tau=\frac{\pi}{4}$.
\end{enumerate}

\item $u\in B_1$, $v\in B_2$ and one of the conditions below hold.
\begin{enumerate}
\item $*$ $a=1$, $b=2$, $s\in\C\backslash\{0\}$, $\bmu=\be_u+s\be_w$ where $B_2=\{v,w\}$, and $\tau=\frac{\pi}{\sqrt{2}}$.
\item $*$ $a=b=2$, $s\in\C\backslash\{0\}$, $\bmu=\be_w+s\be_x$ where $B_1=\{u,w\}$ and $B_2=\{v,x\}$, and $\tau=\frac{\pi}{2}$.
\item $*$ $a=1$, $b\geq 1$, $s=\ii\sqrt{b}\cot(\tau\sqrt{b}/2)
$, $\bmu=-\be_u+s\be_v$, and $\tau\in\R$. 
\item $\dagger$ $a=1$, $b=2$, $s=-\bigg(\frac{1}{\sqrt{2}\csc(\tau\sqrt{2})-\sqrt{1/2}\tan(\tau/\sqrt{2})}\bigg)\ii $, $\bmu=\left(\cos( \tau\sqrt{2})+\ii \frac{s}{\sqrt{2}}\sin(\tau\sqrt{2})\right)\be_u -s\be_w$
where $B_2=\{v,w\}$, and $\tau\in\R$.
\item $\dagger$ $a=1$, $b=2$, $s=-\sqrt{2}\cot(\tau\sqrt{2})$, 
$B_2=\{v,w\}$, $\bmu=\left(\ii\frac{1}{\sqrt{2}}\sin(\tau\sqrt{2})+\frac{s}{2}(\cos(\tau\sqrt{2})+1)\right)\be_v+\left((\ii\sin(\tau\sqrt{2})\frac{1}{\sqrt{2}}+\frac{s}{2}(\cos(\tau\sqrt{2})-1)\right)\be_w$, and $\tau\in\R$.
\item $*$ $a=1$, $b=3$, $(s,\bmu,\tau)\in\left\{\big(1,\ii(\be_w+\be_x),\frac{2\pi(3k+1)}{3\sqrt{3}}\big),\big(1,-\ii(\be_w+\be_x),\frac{2\pi(3k+2)}{3\sqrt{3}}\big)\right\}$, where $B_2=\{v,w,x\}$ and $k\in \Z$.

\item $*$ $a=2$, $b\geq 1$, $s=\ii \sqrt{b/2} \cot(\tau\sqrt{b/2})$, $\bmu=-\be_w+s\be_v$ where $B_1=\{u,w\}$, and $\tau\in\R$.
\end{enumerate}
\end{enumerate}
Moreover, the above statements all hold if we interchange the roles of $a$ and $b$, as well as $B_1$ and $B_2$.
\end{theorem}

\begin{proof}
First, let $u,v\in B_1$ so that $a\geq 2$. Then $s\neq -1$, otherwise $\bu$ is a fixed state. Letting $\bu_1=\be_u+s\be_v\in\C^a$ and $\bu_2=\zero_b$, (\ref{Eq:completebip}) gives us
\begin{equation*}
\bmu=U_A(\tau)\begin{bmatrix}
\bu_1\\ \bu_2\end{bmatrix}=\begin{bmatrix}
\be_u+s\be_v+\frac{1+s}{a}\big(\cos( \tau\sqrt{ab})-1\big)\uno_a \\ \frac{1+s}{\sqrt{ab}}\ii\sin( \tau\sqrt{ab})\uno_b
\end{bmatrix}. 
\end{equation*}
We have two cases. 

\noindent \textbf{Case 1.} Suppose $\sin( \tau\sqrt{ab})=0$. That is, $\tau=\frac{\ell\pi}{\sqrt{ab}}$ for any integer $\ell$. If $\ell$ is even, then $\bu=\bmu$ is periodic at $\tau$. If $\ell$ is odd, 
$\bmu$ has two nonzero entries if and only if $a=2$ and $s\neq -1$, or $(s,a)\in\{(1,4),(\frac{1}{2},3)\}$. If $a=2$ and $s\neq -1$, then $\bmu=-(\be_v+s\be_u)$, and so 1a holds. If $(s,a)=(1,4)$, then PST occurs between $\be_u+\be_v$ and $\be_w+\be_x$ at $\frac{\pi}{2\sqrt{b}}$, where $B_1=\{u,w,v,x\}$. So 1b holds. If $(s,a)=(\frac{1}{2},3)$, then PST occurs between $\be_u+\frac{1}{2}\be_v$ and $\be_w+\frac{1}{2}\be_v$ at $\frac{\pi}{\sqrt{3b}}$, where $B_1=\{u,w,v\}$. This proves 1c. 

\noindent \textbf{Case 2.} Suppose that $\sin( \tau\sqrt{ab})\neq 0$. Since $s\neq -1$ and $\bmu$ has exactly two nonzero entries, we have $b\leq 2$. If $b=1$, then
\begin{equation*}
\bmu=\begin{bmatrix}
\be_u+s\be_v+\frac{1+s}{a}(\cos(\tau\sqrt{a})-1)\uno_a \\ \frac{1+s}{\sqrt{a}}\ii\sin(\tau\sqrt{a})
\end{bmatrix}
\end{equation*}
has two nonzero entries if and only if (i) $a=2$ and $\frac{1+s}{2}(\cos(\tau\sqrt{2})-1)\in\{-1,-s\}$ or (ii) $a=3$, $s=1$ and $\tau=\frac{2\pi(3k+\ell)}{3\sqrt{3}}$, where $\ell\in\{1,2\}$ and $k\in\Z$. Note that (i) yields the values 
\begin{equation*}
s=\dfrac{1+\cos(\tau\sqrt{2})}{1-\cos(\tau\sqrt{2})}=\cot^2(\tau/\sqrt{2})\quad  \text{and} \quad s=\dfrac{1-\cos(\tau\sqrt{2})}{1+\cos(\tau\sqrt{2})}=\tan^2(\tau/\sqrt{2})
\end{equation*}
and the vectors $\bmu=(s-1)\be_v+\frac{1+s}{\sqrt{2}}\ii\sin(\tau\sqrt{2})\be_w$ and $\bmu=(1-s)\be_u+\frac{1+s}{\sqrt{2}}\ii\sin(\tau\sqrt{2})\be_w$,
respectively, where $B_2=\{w\}$. This proves 1d and 1e. Meanwhile, (ii) gives us $\bmu=-\be_x+\ii\be_w$ whenever $\ell=1$ and $\bmu=-\be_x-\ii\be_w$ whenever $\ell=2$, where $B_1=\{u,v,x\}$ and $B_2=\{w\}$. This establishes 1f. On the other hand, if $b=2$, then $\bmu$ has exactly two nonzero entries if and only if $\be_u+s\be_v+\frac{1+s}{a}(\cos(\tau\sqrt{2a})-1)\uno=0$. Equivalently, $a=2$, $s=1$, $\tau=\frac{\pi}{4}$ and $\bmu=\ii (\be_w+\be_x)$, where $B_2=\{w,x\}$. Thus, 1g holds.

Next, suppose that $u\in B_1$ and $v\in B_2$. We again use (\ref{Eq:completebip}) to obtain an expression for $\bmu=U_A(t)\begin{bmatrix}
\bu_1\\ \bu_2\end{bmatrix}$, where $\bu_1=\be_u$ and $\bu_2=s\be_v$. One checks that
\begin{equation}
\label{Eq:pf}
\bmu=\begin{bmatrix}
\be_u+\left(\big(\cos( \tau\sqrt{ab})-1\big)\frac{1}{a}+\ii s\sin(\tau\sqrt{ab})\frac{1}{\sqrt{ab}}\right)\uno_a \\ s\be_v+\left(\ii\sin(\tau\sqrt{ab})\frac{1}{\sqrt{ab}}+s(\cos(\tau\sqrt{ab})-1)\frac{1}{b}\right)\uno_b
\end{bmatrix}. 
\end{equation}
We again have two cases.

\noindent \textbf{Case 1.} Assume $\sin( \tau\sqrt{ab})=0$. That is, $\tau=\frac{\ell\pi}{\sqrt{ab}}$ for any integer $\ell$. If $\ell$ is even, then $\bu=\bmu$ is again periodic at $\tau$. If $\ell$ is odd, then 
$\bmu=\begin{bmatrix}
\be_u-\frac{2}{a}\uno_a \\ s\big(\be_v-\frac{2}{b}\uno_b\big).
\end{bmatrix}$ has two nonzero entries if and only if $(a,b)\in\{(1,2),(2,2)\}$. From this, 2a and 2b are immediate. 

\noindent \textbf{Case 2.} Next, suppose $\sin( \tau\sqrt{ab})\neq 0$. 
Suppose we write (\ref{Eq:pf}) as $\bmu:=\begin{bmatrix}
\bw_1 \\ \bw_2\end{bmatrix}$ where $\bw_1\in\C^a$. We make the following observations for the vector $\bw_1$. First, we have $\bw_1=\be_u$ if and only if 
\begin{equation}
\label{eqkmn1}
s=-\frac{\frac{1}{a}(\cos(\tau\sqrt{ab})-1)}{\ii\sin(\tau\sqrt{ab})/\sqrt{ab}}=-\ii \tan(\tau\sqrt{ab}/2)\sqrt{b/a}.
\end{equation}
If $a\geq 2$, then $\bw_1$ has $a-1$ nonzero entries if and only if 
\begin{equation}
\label{eqkmn2}
s=-\frac{1+\frac{1}{a}(\cos(\tau\sqrt{ab})-1)}{\ii\sin(\tau\sqrt{ab})/\sqrt{ab}}=\bigg(\sqrt{ab}\csc(\tau\sqrt{ab})-\tan(\tau\sqrt{ab}/2)\sqrt{b/a}\bigg)\ii,
\end{equation}
in which case $\bw_1$ has all entries equal to $-1$
except for the $u$th entry which is equal to 0. Moreover, if $a=1$, then $\bw_1$ is a scalar equal to 0 if and only if $s$ satisfies (\ref{eqkmn2}). Next, we make the following observations for the vector $\bw_2$. Note that $\bw_2=s\be_v$ if and only if 
\begin{equation}
\label{eqkmn3}
s=-\frac{\ii\sin(\tau\sqrt{ab})/\sqrt{ab}}{\frac{1}{b}(\cos(\tau\sqrt{ab})-1)}=\ii \cot(\tau\sqrt{ab}/2)\sqrt{b/a}.
\end{equation}
\item If $b\geq 2$, then $\bw_2$ has $b-1$ nonzero entries if and only if 
\begin{equation}
\label{eqkmn4}
s=-\frac{\ii\sin(\tau\sqrt{ab})/\sqrt{ab}}{1+\frac{1}{b}(\cos(\tau\sqrt{ab})-1)}=-\bigg(\frac{1}{\sqrt{ab}\csc(\tau\sqrt{ab})-\tan(\tau\sqrt{ab}/2)\sqrt{a/b}}\bigg)\ii.
\end{equation}
in which case $\bw_2$ has all entries equal to $-s$ 
except for the $v$th entry which is equal to 0. Moreover, if $b=1$, then $\bw_2$ is a scalar equal to 0 if and only if $s$ satisfies (\ref{eqkmn4}).
Additionally, if $s$ is not equal to the values in (\ref{eqkmn1}) and (\ref{eqkmn2}), then all entries of $\bw_1$ are nonzero. Similarly, if $s$ is not equal to the values in (\ref{eqkmn3}) and (\ref{eqkmn4}), then all entries of $\bw_2$ are nonzero. Furthermore, (\ref{eqkmn1}) and (\ref{eqkmn2}) cannot simultaneously hold, (\ref{eqkmn3}) and (\ref{eqkmn4}) cannot simultaneously hold, and (\ref{eqkmn1}) and (\ref{eqkmn3}) cannot simultaneously hold (otherwise $\bmu$ is periodic). One also checks that (\ref{eqkmn2}) and (\ref{eqkmn3}) simultaneously hold if and only if $a=2$. We divide the proof into the following subcases.
\begin{itemize}
\item Let $a=1$ (so that $\bw_1$ has one entry). If $\bw_1^T\be_u\neq 0$, then $\bmu$ has exactly two nonzero entries if and only if $\bw_2$ has exactly one nonzero entry. Equivalently, one of the following conditions holds.
\begin{itemize}
\item $b\geq 1$ and $s$ satisfies (\ref{eqkmn3}), in which case $\bmu=-\be_u+s\be_v$. This establishes 2c.
\item $b=2$ and $s$ satisfies (\ref{eqkmn4}), in which case $\bmu=\left(\cos( \tau\sqrt{2})+\ii \frac{s}{\sqrt{2}}\sin(\tau\sqrt{2})\right)\be_u -s\be_w$, where $B_2=\{v,w\}$. This proves 2d.
\end{itemize}
If $\bw_1^T\be_u=0$ (so that (\ref{eqkmn2}) is satisfied), then $\bmu$ has exactly two nonzero entries if and only if $\bw_2$ has exactly two nonzero entries. Equivalently, one of the following holds.
\begin{itemize}
\item $b=2$ and $\bmu=\left(\ii\frac{1}{\sqrt{2}}\sin(\tau\sqrt{2})+\frac{s}{2}(\cos(\tau\sqrt{2})+1)\right)\be_v+\left(\ii\frac{1}{\sqrt{2}}\sin(\tau\sqrt{2})+\frac{s}{2}(\cos(\tau\sqrt{2})-1)\right)\be_w$ where $B_2=\{v,w\}$. In this case, (\ref{eqkmn2}) simplifies to $s=-\sqrt{2}\cot(\tau\sqrt{2})$, so 2e holds.
\item $b=3$ and $s$ satisfies (\ref{eqkmn4}), in which case 
$\bmu=-s(\be_w+\be_x)$ where $B_2=\{v,w,x\}$. As $s$ satisfies (\ref{eqkmn2}) and (\ref{eqkmn4}) and $(a,b)=(1,3)$, we get $\tau\in\big\{\frac{2\pi(3k+\ell)}{3\sqrt{3}}:\ell=1,2\ \text{and}\ k\in\Z\big\}$. In particular, $s=-\ii$ if $\tau=\frac{2\pi(3k+1)}{3\sqrt{3}}$ and $s=\ii$ otherwise. This yields 2f, in which case $s$ is a phase factor.
\end{itemize}
\item Let $a\geq 2$ with $u,w\in B_1$ such that $u\neq w$. Since $s$ cannot simultaneously satisfy (\ref{eqkmn3}) and (\ref{eqkmn4}), it cannot happen that $\bw_1^T\be_u\neq 0$ and $\bw_1^T\be_w\neq 0$. Similarly, since $s$ also cannot satisfy both (\ref{eqkmn1}) and (\ref{eqkmn2}), it cannot happen that $\bw_1^T\be_u=\bw_1^T\be_w=0$. Now, suppose $\bw_1^T\be_u=0$ and $\bw_1^T\be_w\neq 0$ (equivalently, (\ref{eqkmn2}) holds). Then $\bmu$ has exactly two nonzero entries if and only if $a=2$ and $\bw_2$ has exactly one nonzero entry. If $b\geq 1$ and $s$ satisfies (\ref{eqkmn3}), then $\bmu=-\be_w+s\be_v$ has exactly two nonzero entries. This proves 2g. Now, if $b\geq 1$ and $s$ does not satisfy (\ref{eqkmn3}), then $\bmu$ has exactly two nonzero entries if and only if $b=1$ or $b=2$ and $s$ satisfies (\ref{eqkmn4}). The case $b=1$ is equivalent to 2d. The case $b=2$ cannot happen because $a=b=2$ contradicts one of (\ref{eqkmn2}) and (\ref{eqkmn4}).
The case when $\bw_1^T\be_u\neq 0$ and $\bw_1^T\be_w=0$ may be shown to be equivalent to 2g with the roles of $B_1$ and $B_2$ reversed.
\end{itemize}
Combining these two subcases completes the proof of the second case. This proves the forward implication. The converse is immediate.
\end{proof}

We note that in Theorem \ref{Thm:kmnPSTc1} 1 (d) and (e), depending on the choice of $\tau,$ the entanglement degrees may differ in $\bu$ and $\bmu$. For example, in Theorem \ref{Thm:kmnPSTc1} 1 (c), it is straightforward to show that if $\tau = \pm \frac{\tan^{-1}(\sqrt{2})}{\sqrt{2}}$ then $\bmu$ has maximum entanglement, while the value of $s$ is given by $\frac{(\sqrt{3}-1)^2}{2}$. 

\subsubsection{Laplacian case}

The spectral decomposition of the transition matrix of $K_{a,b}$ relative to $L$ is given by
\begin{equation}
\label{Eq:completebipL}
U_{L}(t)=\frac{1}{n}J+\frac{e^{\ii tn}}{abn}\left[\begin{array}{cc} b^2J&-abJ \\ -abJ&a^2J\end{array} \right]+e^{\ii tb}\left[\begin{array}{cc} I_a-\frac{1}{a}J&\zero \\ \zero&\zero\end{array} \right]+e^{\ii ta}\left[\begin{array}{cc} \zero&\zero \\ \zero&I_b-\frac{1}{b}J\end{array} \right].
\end{equation}

\begin{theorem}
\label{Thm:kmnPST1}
Let $V(K_{a,b})=B_1\cup B_2$, where $|B_1|=a$. Suppose $\bu=\be_u+s\be_v$ for some $s\in\C\backslash\{0\}$ and $\bmu\in\C^n$ is a vector with two nonzero entries. PST occurs from $\bu$ to $\bmu$ in $K_{a,b}$ relative to $L$ at time $\tau$ if and only if one of the following conditions hold.
\begin{enumerate}
\item  $u,v\in B_1$, and one of the conditions below hold.
\begin{enumerate}
\item $\dagger$ $a=2$, $b=1$, $s\in\C\backslash\{0,\pm 1\}$, $e^{\ii\tau}+\frac{1+s}{6}(2+e^{\ii 3\tau}-3e^{\ii \tau})=0$, $\bmu=(s-1)e^{\ii\tau}\be_v+\frac{1+s}{3}(1-e^{\ii3\tau})\be_w$ where $B_2=\{w\}$, and $\tau\in\R\backslash\{\frac{2k\pi}{3}:k\in\Z\}$.
\item $\dagger$ $a=2$, $b=1$,  $s\in\C\backslash\{0,\pm 1\}$ satisfies $se^{\ii\tau}+\frac{1+s}{6}(2+e^{\ii 3\tau}-3e^{\ii \tau})=0$, $\bmu=(1-s)e^{\ii\tau}\be_u+\frac{1+s}{3}(1-e^{\ii3\tau})\be_w$ where $B_2=\{w\}$, and $\tau\in\R\backslash\{\frac{2k\pi}{3}:k\in\Z\}$.
\item $*$ $a=b=2$, $s=1$, $\bmu=-(\be_w+\be_x)$ where $B_2=\{w,x\}$, and $\tau=\frac{\pi}{4}$.
\item $\dagger$
$a=2$, $b\geq 1$, $s\in\C\backslash\{0,-1\}$, $\bmu=e^{\ii b \tau}(\be_u+s\be_v)+\frac{1+s}{2}(1- e^{\ii b \tau})\uno_2$, and $\tau\in\R$ such that $e^{\ii2\tau }=e^{-\ii\tau b }\notin\{1,\pm\frac{s-1}{s+1}\}$.
\item $*$ $a=3$, $b\geq 1$, $s\in\C\backslash\{0,-1,2\}$ satisfies $\operatorname{Re}s=\frac{1}{2}$, $\bmu=\frac{s^2-1}{s-2}\be_v-\frac{s+1}{s-2}\be_w$ where $B_1=\{u,v,w\}$ and $\tau\in\R$ such that 
$e^{\ii3\tau }=e^{-\ii\tau b }=\frac{s-2}{s+1}$.
\item $*$ $a=3$, $b\geq 1$, $s\in\C\backslash\{0,-1,\frac{1}{2}\}$ satisfies $(\operatorname{Re}s-1)^2+(\operatorname{Im}s)^2=1$, $\bmu=\frac{s^2-1}{2s-1}\be_u+\frac{s^2+s}{2s-1}\be_w$ where $B_1=\{u,v,w\}$, and $\tau\in\R$ such that 
$e^{\ii3\tau}=e^{-\ii\tau b }=\frac{1-2s}{s+1}$.
\item 
$*$ $a=4$, $b=4k$ for any odd $k$, $s=1$, $\bmu=\be_w+\be_x$ where $B_1=\{u,v,w,x\}$ and $\tau=\frac{\pi}{4}$.
\end{enumerate}
\item $u\in B_1$, $v\in B_2$ and one of the conditions below hold.
\begin{enumerate}
\item $*$ $a=b=2$, $s\in\C\backslash\{0\}$, $\bmu=\be_w+s\be_x$, and $\tau=\frac{\pi}{2}$.
\item $\dagger$ $a=1$, $b\geq 1$, $s=\frac{b(1-e^{\ii \tau (b+1)})}{(b+1)e^{\ii \tau}-b-e^{\ii \tau (b+1)}}$, $\bmu=\left(e^{\ii b \tau}+\frac{(1+be^{\ii \tau (b+1)}-ne^{\ii b \tau})}{b+1}+\frac{s(1-e^{\ii \tau (b+1)})}{b+1}\right)\be_u+se^{\ii \tau }\be_v$, and $\tau\in\R$ such that $e^{\ii\tau(b+1)}\neq 1$.
\item $*$ $a=1$, $b=2$, $s=\frac{2(e^{\ii 3\tau}-1)}{2+e^{\ii 3\tau}+3e^{\ii \tau}}\neq \frac{1+2e^{\ii 3\tau}}{e^{\ii 3\tau}-1}$ with $e^{\ii 3\tau}\neq 1$, $\bmu=\frac{1}{3}(1+2e^{\ii 3\tau}+s(1-e^{\ii 3\tau}))\be_u-se^{\ii \tau}\be_w$ where $B_2=\{v,w\}$, and $\tau\in\R$.
\item $\dagger$ $a=1$, $b=2$, $s=-\frac{1+2e^{\ii 3 \tau}}{1-e^{\ii 3 \tau}}$ and $e^{\ii 3 \tau}\neq 1$, $\bmu=\left(se^{\ii \tau}+\frac{s}{6}(2+e^{\ii 3 \tau}-3e^{\ii \tau})+\frac{1}{3}(1-e^{\ii 3 \tau})\right)\be_v+\left(\frac{s}{6}(2+e^{\ii 3 \tau}-3e^{\ii \tau})+\frac{1}{3}(1-e^{\ii 3 \tau})\right)\be_w$ where $B_2=\{v,w\}$, and $\tau\in\R$.
\item $*$ $a=2$, $b\geq 3$ is odd, $s=\frac{b}{2}$, $\bmu=\be_w+s\be_v$ where $B_1=\{u,w\}$, at $\tau=\pi$. 
\item $*$ $a=2$, $b=4k$ for any integer $k$, $s=-1$, $\bmu=\be_w-\be_v$ where $B_1=\{u,w\}$, and $\tau=\frac{\pi}{2}$.
\end{enumerate}
\end{enumerate}
Moreover, the above statements all hold if we interchange the roles of $a$ and $b$, as well as $B_1$ and $B_2$.
\end{theorem}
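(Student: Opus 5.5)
We will mirror the proof of Theorem \ref{Thm:kmnPSTc1}, now using the Laplacian decomposition (\ref{Eq:completebipL}). The idea is to compute $\bmu=U_L(\tau)\bu$ explicitly and then ask exactly when this vector has two nonzero entries. Since $L\uno=\zero$ and the remaining eigenvalues are $a$, $b$ and $n=a+b$, we will track three phases, $e^{\ii\tau b}$, $e^{\ii\tau a}$ and $e^{\ii\tau n}$, instead of the single trigonometric parameter of the adjacency case. Throughout we discard two kinds of $\tau$: those for which $\bu$ is periodic at $\tau$ (so that $\bmu$ is a multiple of $\bu$), and, via Observation \ref{Thm:charPST}, those $s$ for which $\bu$ is a fixed state. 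Note that $s=-1$ with $u,v$ in the same part makes $\bu$ a Laplacian eigenvector.

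\textbf{Case $u,v\in B_1$.} Here $\bu=(\be_u+s\be_v,\zero_b)$. Applying (\ref{Eq:completebipL}) gives
\[\bmu=\begin{bmatrix} e^{\ii\tau b}(\be_u+s\be_v)+\frac{1+s}{a}\Big(\frac{a}{n}+\frac{b}{n}e^{\ii\tau n}-e^{\ii\tau b}\Big)\uno_a \\ \frac{1+s}{n}\big(1-e^{\ii\tau n}\big)\uno_b\end{bmatrix}.\]
We split according to whether $e^{\ii\tau n}=1$.
\begin{itemize}
\item If $e^{\ii\tau n}=1$, the $B_2$ block vanishes. The $B_1$ block is then $e^{\ii\tau b}(\be_u+s\be_v)+\frac{1+s}{a}(1-e^{\ii\tau b})\uno_a$. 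This has exactly two nonzero entries only when one of the following holds:
\begin{itemize}
\item $a=2$, which gives 1(d), where the condition $e^{\ii 2\tau}=e^{-\ii\tau b}$ is just $e^{\ii\tau n}=1$;
\item $a=3$ and one of the $u$- or $v$-entries cancels, which gives 1(e) and 1(f); writing $\omega=e^{\ii\tau b}$ we solve a linear equation for $\omega$ and then impose $|\omega|=1$, which is where the circle conditions $\operatorname{Re}s=\tfrac12$ and $|s-1|=1$ come from;
\item $a=4$ with both the $u$- and $v$-entries cancelling, which forces $s=1$ and $\omega=-1$; combined with $e^{\ii\tau(4+b)}=1$ this yields $\tau=\pi/4$ and $b\equiv 4\pmod 8$, i.e.\ 1(g).
\end{itemize}
\item If $e^{\ii\tau n}\ne1$, every entry of the $B_2$ block is nonzero, so $b\le 2$.
\begin{itemize}
\item When $b=1$ we need exactly one nonzero entry in the $B_1$ block, which forces $a=2$. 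Cancelling the $u$- or the $v$-entry gives 1(a) and 1(b) respectively.
\item When $b=2$ the whole $B_1$ block must vanish, which forces $a=2$ and $s=1$; this is 1(c).
\end{itemize}
\end{itemize}

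\textbf{Case $u\in B_1$, $v\in B_2$.} The same computation gives $\bmu=(\bw_1,\bw_2)$ with
\begin{align*}
\bw_1&=e^{\ii\tau b}\be_u+\Big(\tfrac1n+\tfrac{b}{an}e^{\ii\tau n}-\tfrac1a e^{\ii\tau b}+\tfrac{s}{n}\big(1-e^{\ii\tau n}\big)\Big)\uno_a,\\
\bw_2&=se^{\ii\tau a}\be_v+\Big(\tfrac{s}{n}+\tfrac{sa}{bn}e^{\ii\tau n}-\tfrac{s}{b}e^{\ii\tau a}+\tfrac1n\big(1-e^{\ii\tau n}\big)\Big)\uno_b.
\end{align*}
As in the adjacency proof, we record four linear conditions on $s$:
\begin{itemize}
\item the constant added to $\bw_1$ is $0$;
\item the constant added to $\bw_1$ is $-e^{\ii\tau b}$, which kills the $u$-entry;
\item and the analogous two conditions for $\bw_2$.
\end{itemize}
We then determine which pairs of these conditions are compatible, and run through subcases by $a\in\{1,2\}$ or $a\ge3$; the last forces both special constants in $\bw_1$ to vanish, which is impossible.
\begin{itemize}
\item $a=1$: this produces 2(b), 2(c) and 2(d).
\item $a=2$: one $B_1$ entry cancels and $\bw_2$ has a single entry. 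Reducing to real time conditions should give 2(e), where $e^{\ii\pi b}=-1$ for odd $b$ and $s=b/2$, and 2(f), where $s=-1$ and $\tau=\pi/2$.
\item $a=b=2$ with $e^{\ii\tau n}=1$: this gives 2(a).
\end{itemize}
The converse direction is a direct substitution into the formulas for $\bmu$. Swapping the roles of $B_1$ and $B_2$ gives the final sentence of the statement.

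\textbf{Main obstacle.} We expect the hardest part to be the $a=3$ and $a=2$, $b$ general subcases. There a complex linear condition on $s$ must be reconciled with the requirement that several phases $e^{\ii\tau a}$, $e^{\ii\tau b}$, $e^{\ii\tau n}$ simultaneously take prescribed unimodular values. Getting the resulting conditions in exactly the stated form, such as $e^{\ii3\tau}=e^{-\ii\tau b}=\frac{s-2}{s+1}$ or $b=4k$ with $k$ odd, and excluding degenerate values such as $s=2$, $s=\frac12$, or periodicity, will require careful bookkeeping. We would handle it by first solving for the unknown phase as a Möbius function of $s$, then imposing unimodularity, and finally checking consistency with the remaining phase equation.
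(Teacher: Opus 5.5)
\textbf{Overall.} Your route is the paper's: expand $U_L(\tau)\bu$ spectrally, split on whether $e^{\ii\tau n}=1$, and in the cross case track the four linear conditions on $s$. The same-part case is essentially right. Your direct derivation of 1(g) from $e^{\ii\tau b}=-1$ and $e^{\ii\tau(4+b)}=1$ is cleaner than the paper's appeal to a minimum-PST-time lemma. However, ``$b=1$ forces $a=2$'' still needs a check: $a=3$, $s=1$ would require $3+e^{4\ii\tau}+2e^{\ii\tau}=0$, which has no real solution.

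\textbf{The gap: the cross case with $a\ge 3$.} You claim this case forces both special constants in $\bw_1$ to vanish and is therefore impossible. That is false. For any $a\ge 3$, take the branch where the constant added to $\bw_1$ is $0$. Then $\bw_1=e^{\ii\tau b}\be_u$ has a single nonzero entry, and PST only needs $\bw_2$ to have one nonzero entry, which can happen when $b\in\{1,2\}$.

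Concretely, take $K_{3,2}$ with $u$ in the part of size $3$, $B_2=\{v,x\}$, $s=\tfrac23$ and $\tau=\pi$. Then $E_5\bu=\zero$, and one computes $U_L(\pi)(\be_u+\tfrac23\be_v)=\be_u+\tfrac23\be_x$. This is 2(e) with the parts interchanged. So as written, your forward direction would ``prove'' that the role-swapped cases asserted in the theorem's final sentence do not exist.

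\textbf{How the paper handles it.} The paper treats $a\ge 2$ in general. The branch where the $u$-entry of $\bw_1$ vanishes is what forces $a=2$. The branch where the non-$u$ entries vanish is identified with the role-reversed cases.

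\textbf{Two ways to repair.} You can follow the paper. Alternatively, first reduce to $a\le b$ using $\be_u+s\be_v=s(\be_v+s^{-1}\be_u)$; note the swap also replaces $s$ by $s^{-1}$. After that, $a\ge 3$ forces $b\ge 3$ and the case really is empty, but not for the reason you give:
\begin{itemize}
\item If $\bw_1=e^{\ii\tau b}\be_u$, then $\bw_2$ needs its constant to be $0$ as well. That is incompatible, since it would make $\bu$ periodic.
\item Every other branch leaves at least three nonzero entries.
\end{itemize}

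\textbf{Smaller checks your sketch skips.}
\begin{itemize}
\item In the $a=1$ subcase you must exclude $b=3$ with $\bw_1=\zero$. This amounts to showing that $3z^4+2z^3+1=0$ has no root with $|z|=1$.
\item For 2(e) and 2(f), the decisive step is equating the two expressions for $s$ and obtaining $(e^{2\ii\tau}+1)(e^{\ii b\tau}+1)=0$. Your sketch only gestures at this.
\end{itemize}
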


\begin{proof}
First, suppose $u,v\in B_1$ so that $a\geq 2$. Then $s\neq -1$, otherwise $\bu$ is a fixed state. We use (\ref{Eq:completebipL}) to obtain an expression for $\bmu=U_L(\tau)\begin{bmatrix}
\bu_1\\ \bu_2\end{bmatrix}$, where $\bu_1=\be_u+s\be_v\in\C^a$ and $\bu_2=\zero_b$. One checks that
\begin{equation*}
\bmu=\begin{bmatrix}
e^{\ii b \tau}(\be_u+s\be_v)+\frac{1+s}{an}(a+be^{\ii n \tau}-ne^{\ii b \tau})\uno_a\\ 
\frac{1+s}{n}(1-e^{\ii n \tau})\uno_b
\end{bmatrix}:=\begin{bmatrix}\bw_1\\\bw_2\end{bmatrix} 
\end{equation*}
where $\bw_1\in\C^a$. We proceed with two cases.

\noindent \textbf{Case 1.} Let $e^{\ii n \tau}\neq 1$ so that $\frac{1+s}{n}(1-e^{\ii n \tau})\uno_b\neq \zero$. Then $\bmu$ has two nonzero entries if and only if either (i) $b=1$ and $\bw_1$ has one nonzero entry or (ii) $b=2$ and all entries of $\bw_1$ are zero.
\begin{itemize}
\item Suppose condition (i) holds. Then $a\in\{2,3\}$, otherwise $\bw_1$ has at least two nonzero entries. If $a=3$, then $\bw_1$ has one nonzero entry if and only if $s=1$ and $3+e^{\ii 4\tau}+2e^{\ii \tau}=0$, implying that $e^{\ii 4\tau}=e^{\ii \tau}=-1$, a contradiction. If $a=2$, $\bw_1$ has one nonzero entry if and only if $s\neq 1$, and either $e^{\ii\tau}+\frac{1+s}{6}(2+e^{\ii 3\tau}-3e^{\ii \tau})=0$ or $se^{\ii\tau}+\frac{1+s}{6}(2+e^{\ii 3\tau}-3e^{\ii \tau})=0$, respectively giving us $\bmu=(s-1)e^{\ii\tau}\be_v+\frac{1+s}{3}(1-e^{\ii3\tau})\be_w$ and $\bmu=(1-s)e^{\ii\tau}\be_u+\frac{1+s}{3}(1-e^{\ii3\tau})\be_w$. This proves 1a and 1b.
\item Suppose condition (ii) holds. Then $a=2$, otherwise $\bw_1$ has at least one nonzero entry. In this case, $\bmu$ has two nonzero entries if and only if $s=1$ and $\tau=\frac{\pi}{4}$. This establishes 1c.
\end{itemize}

\noindent \textbf{Case 2.} Suppose $e^{\ii n \tau}=1$ so that $\bw_1=e^{\ii b \tau}(\be_u+s\be_v)+\frac{1+s}{a}(1-e^{\ii b \tau})\uno_a$ and $\bw_2=0$. If $e^{\ii b \tau}=1$, then $\bu$ is periodic, and so we may assume that $e^{\ii b \tau}\neq 1$. Then $\bmu$ has two nonzero entries if and only if $\bw_1$ has two nonzero entries, which can only happen if $a\in\{2,3,4\}$. Suppose $a=2$. Then $\bw_1$ has two nonzero entries if and only if both $e^{\ii b \tau}+\frac{1+s}{2}(1-e^{\ii b \tau})$ and $se^{\ii b \tau}+\frac{1+s}{2}(1-e^{\ii b \tau})$ are nonzero. Equivalently, $e^{\ii b \tau}\neq \pm \frac{s+1}{s-1}$. This proves 1d. Next, suppose $a=3$. Then $\bw_1$ has two nonzero entries if and only if exactly one of $e^{\ii b \tau}+\frac{1+s}{3}(1-e^{\ii b \tau})$ or $se^{\ii b \tau}+\frac{1+s}{3}(1-e^{\ii b \tau})$ is equal to 0. The former yields $e^{\ii b \tau}=\frac{s+1}{s-2}$, and so $\operatorname{Re}s=\frac{1}{2}$ and $s\neq 2$, In this case, we obtain $\bmu=\frac{s^2-1}{s-2}\be_v-\frac{s+1}{s-2}\be_w$. This proves 1e. The latter yields $e^{\ii b \tau}=\frac{s+1}{1-2s}$, and so $(\operatorname{Re}s-1)^2+(\operatorname{Im}s)^2=1$ and $s\neq \frac{1}{2}$. In this case, we obtain $\bmu=\frac{s^2+s}{s-2}\be_v-\frac{s^2+s}{1-2s}\be_w$. This establishes 1f. For the last subcase, suppose that $a=4$. Then $\bw_1$ has two nonzero entries if and only if $s=1$ and $e^{\ii b \tau}=-1$, in which case $\bmu=\be_w+\be_x$ where $B_1=\{u,v,w,x\}$. Since $\bmu$ is a real vector that is periodic at $\frac{2\pi}{\operatorname{gcd}(a,b)}$, the minimum PST time must be $\tau=\frac{\pi}{\operatorname{gcd}(a,b)}$ \cite[Lemma 5.1(2)]{godsil2025perfect}. Since $e^{\ii b \tau}=-1$ and $e^{\ii n \tau}=1$, we get $e^{\ii 4\tau}=-1$, and so $\tau=\frac{\pi}{4}$. Thus, $b=4k$ for some odd $k$. This gives us 1g.

Next, suppose $u\in B_1$ and $v\in B_2$. Letting $\bu_1=\be_u\in\C^a$ and $\bu_2=s\be_v$, (\ref{Eq:completebipL}) yields the expression
\begin{equation}
\label{Eq:kmnlapsh}
\bmu=U_L(t)\begin{bmatrix}
\bu_1\\ \bu_2\end{bmatrix}=\begin{bmatrix}
e^{\ii b \tau}\be_u+\left(\frac{1}{an}(a+be^{\ii n \tau}-ne^{\ii b \tau})+\frac{s}{n}(1-e^{\ii n \tau})\right)\uno_a\\ se^{\ii a \tau}\be_v+\left(\frac{s}{bn}(b+ae^{\ii n \tau}-ne^{\ii a \tau})+\frac{1}{n}(1-e^{\ii n \tau})\right)\uno_b
\end{bmatrix}:=\begin{bmatrix}\bw_1\\\bw_2\end{bmatrix}. 
\end{equation}
We again proceed with cases.

\noindent \textbf{Case 1.} Let $e^{\ii n \tau}=1$, so that $\bw_1=e^{\ii b \tau}\be_u+\frac{1}{a}\left(1-e^{\ii b \tau}\right)\uno_a$ and $\bw_2=s\left(e^{\ii a \tau}\be_v+\frac{1}{b}(1-e^{\ii a \tau})\right)\uno_b$. In this case, $e^{\ii b \tau}=1$ if and only if $e^{\ii a \tau}=1$. Also, if $e^{\ii b \tau}=1$, then $\bu=\bmu$, so we may assume that $e^{\ii a \tau},e^{\ii b \tau}\neq 1$.
\begin{itemize}
\item Let $a=1$, so that $\bw_1=\be_u$. Since $b\geq 2$, $\bmu$ has two nonzero entries if and only if $b=2$ and $e^{\ii \tau}+\frac{1}{2}(1-e^{\ii \tau })=0$. This case yields $\tau=\pi$, and so $e^{\ii b \tau}=e^{\ii2\tau}=1$, a contradiction.
\item Let $a=2$ so that $\bw_1=e^{\ii b \tau}\be_u+\left(\frac{1}{2}(1-e^{\ii b \tau})\right)\uno_2$. Since $\frac{1}{2}(1-e^{\ii b \tau})\neq 0$, $\bmu$ has two nonzero entries if and only if either (i) $e^{\ii b \tau}+\frac{1}{2}(1-e^{\ii b \tau})=0$ and $b=1$, or (ii) $e^{\ii b \tau}+\frac{1}{2}(1-e^{\ii b \tau})=0$, $b=2$ and $e^{\ii 2\tau}+\frac{1}{2}(1-e^{\ii 2\tau})=0$. Note that (i) yields $\tau=\pi$, and so $e^{\ii a \tau}=e^{\ii2\tau}=1$, a contradiction. Meanwhile, (ii) yields $\tau=\frac{\pi}{2}$ and $\bmu=\be_w+s\be_x$ where $B_1=\{u,w\}$ and $B_2=\{v,x\}$. This proves 2a.
\end{itemize}

\noindent \textbf{Case 2.} Suppose $e^{\ii n \tau}\neq 1$. We first make observations about $\bw_1$. Note that $\bw_1=e^{\ii b \tau}\be_u$ if and only if
\begin{equation}
\label{eqkmnL1}
s=\frac{ne^{\ii b \tau}-a-be^{\ii n \tau}}{a(1-e^{\ii n \tau})}
\end{equation}
If $a\geq 2$, then $\bw_1$ has $a-1$ nonzero entries if and only if
\begin{equation}
\label{eqkmnL2}
s=-\frac{n(a-1)e^{\ii b \tau}+a+be^{\ii n \tau}}{a(1-e^{\ii n \tau})},
\end{equation}
in which case all entries of $\bw_1$ are equal to $-e^{\ii b \tau}$ 
except for $u$th entry which is equal to zero. Moreover, if $a=1$, then $\bw_1=0$ if and only if $s$ satisfies (\ref{eqkmnL2}).
Similarly, $\bw_2=se^{\ii a \tau}\be_v$ if and only if
\begin{equation}
\label{eqkmnL3}
s=\frac{b(1-e^{\ii n \tau})}{ne^{\ii a \tau}-b-ae^{\ii n \tau}}.
\end{equation}
If $b\geq 2$, then $\bw_2$ has $b-1$ nonzero entries if and only if
\begin{equation}
\label{eqkmnL4}
s=-\frac{b(1-e^{\ii n \tau})}{n(b-1)e^{\ii a \tau}+b+ae^{\ii n \tau}}
\end{equation}
in which case all entries of $\bw_2$ are equal to $-se^{\ii a \tau}$, except for $v$th entry which is equal to zero. Moreover, if $b=1$, then $\bw_2=0$ if and only if $s$ satisfies (\ref{eqkmnL4}).
Additionally, if $s$ is not equal to the values in (\ref{eqkmnL1}) and (\ref{eqkmnL2}), then all entries of $\bw_1$ are nonzero. Similarly, if $s$ is not equal to the values in (\ref{eqkmnL3}) and (\ref{eqkmnL4}), then all entries of $\bw_2$ are nonzero. Furthermore, (\ref{eqkmnL1}) and (\ref{eqkmnL2}) cannot simultaneously hold, (\ref{eqkmnL3}) and (\ref{eqkmnL4}) cannot simultaneously hold, and (\ref{eqkmnL1}) and (\ref{eqkmnL3}) cannot simultaneously hold (otherwise $\bmu$ is periodic). 
We divide the proof into 2 subcases.

\begin{itemize}
\item Let $a=1$ (so that $\bw_1$ has one entry). If $\bw_1^T\be_u\neq 0$ (so that (\ref{eqkmnL2}) does not hold), then $\bmu$ has exactly two nonzero entries if and only if $\bw_2$ has exactly one nonzero entry. Equivalently, either: 
\begin{itemize}
\item $b\geq 1$ and (\ref{eqkmnL3}) holds with $\bmu=\left(e^{\ii b \tau}+\frac{1}{n}(1+be^{\ii n \tau}-ne^{\ii b \tau})+\frac{s}{n}(1-e^{\ii n \tau})\right)\be_u+se^{\ii \tau }\be_v$.
\item $b=2$ and (\ref{eqkmnL4}) holds, with $\bmu=\frac{1}{3}\left(1+2e^{\ii 3 \tau}+s(1-e^{\ii 3 \tau})\right)\be_u -se^{\ii \tau }\be_w$, where $B_2=\{v,w\}$.
\end{itemize}
These two scenarios establish 2b and 2c, respectively. Now, if $\bw_1^T\be_u=0$ (so that (\ref{eqkmnL2}) is satisfied), then $\bmu$ has exactly two nonzero entries if and only if $\bw_2$ has exactly two nonzero entries. If $b=2$, then we get $\bmu=\left(se^{\ii \tau}+\frac{s}{6}(2+e^{\ii 3 \tau}-3e^{\ii \tau})+\frac{1}{3}(1-e^{\ii 3 \tau})\right)\be_v+\left(\frac{s}{6}(2+e^{\ii 3 \tau}-3e^{\ii \tau})+\frac{1}{3}(1-e^{\ii 3 \tau})\right)\be_w$
where $B_2=\{v,w\}$. This proves 2d.
If $b\geq 3$, then $\bw_2$ has two nonzero entries if and only if $b=3$ and $s$ satisfies (\ref{eqkmnL4}). 
Since (\ref{eqkmnL2}) and (\ref{eqkmnL4}) hold and $(a,b)=(1,3)$, we get $s=-\frac{1+3e^{\ii \tau 4}}{1-e^{\ii \tau 4}}=-\frac{3(1-e^{\ii \tau 4})}{8e^{\ii \tau}+3+e^{\ii \tau 4}}$. This yields $2e^{\ii 3 \tau}+3e^{\ii \tau 4}+1=0$, which has no solution. Thus, this case cannot happen.

\item Let $a\geq 2$ with $u,w\in  B_1$ such that $u\neq w$. Similar to the proof of Theorem \ref{Thm:kmnPSTc1}, it cannot happen that $\bw_1^T\be_u\neq 0$ and $\bw_1^T\be_w\neq 0$, and similarly, it cannot happen that $\bw_1^T\be_u=\bw_1^T\be_w=0$. Now, suppose $\bw_1^T\be_u=0$ and $\bw_1^T\be_w\neq 0$ (equivalently, (\ref{eqkmnL2}) holds). Then $\bmu$ has exactly two nonzero entries if and only if $a=2$ and $\bw_2$ has exactly one nonzero entry. If $b\geq 1$ and $s$ satisfies (\ref{eqkmnL3}), then 
\begin{equation*}
   s=\frac{b(1-e^{\ii \tau (b+2)})}{(b+2)e^{\ii 2\tau}-b-2e^{\ii \tau (b+2)}}=-\frac{2+be^{\ii \tau (b+2)}+(b+2)e^{\ii b \tau}}{2(1-e^{\ii \tau (b+2)})},    
\end{equation*}
which holds if and only if $(e^{\ii 2\tau}+1)(e^{\ii b\tau}+1)=0$. If $e^{\ii b\tau}=-1$, then $e^{\ii 2\tau}=1$, and so $\tau=\pi$, $s=\frac{b}{2}$, $b\geq 3$ is odd and $\bmu=\be_w+\frac{b}{2}\be_v$. This proves 2e. On the other hand, if $e^{\ii 2\tau}=-1$, then $e^{\ii b\tau}=1$ because $e^{\ii n\tau}\neq 1$, and so $\tau=\frac{\pi}{2}$, $s=-1$, $b\equiv 0$ (mod 4) and $\bmu=-(\be_w-\be_v)$. Thus, 2f holds.
Now, if $b\geq 1$ and $s$ does not satisfy (\ref{eqkmnL3}), then $\bmu$ has exactly two nonzero entries if and only if $b=1$ or $b=2$ and $s$ satisfies (\ref{eqkmnL4}). The case $b=1$ is equivalent to 2c. The case $b=2$ cannot happen because $a=b=2$ contradicts one of (\ref{eqkmnL2}) and (\ref{eqkmnL4}). The case when $\bw_1^T\be_u\neq 0$ and $\bw_1^T\be_w=0$ may be shown to be equivalent to 2e and 2f with the roles of $B_1$ and $B_2$ reversed.
\end{itemize}
Combining these two subcases completes the proof of the second case. This proves the forward implication. The converse is immediate.
Combining all cases above yields the desired conclusion.
\end{proof}

In Theorem  \ref{Thm:kmnPST1} 1a, b, d,  there are choices of $\tau$ and $s$ so that the degrees of entanglement in $\bu$ and $\bmu$ are different. In  Theorem  \ref{Thm:kmnPST1} 1e) one can show that $s-1=-\overline{s}$ so that the degree of entanglement is preserved. In Theorem  \ref{Thm:kmnPST1} 1f), we have $\frac{s^2+s}{s^2-1}=\frac{s}{s-1}.$ From hypothesis, $|s-1|=1,$ and so $|\frac{s}{s-1}|=|s|$ so that the degree of entanglement is preserved. 
In Theorem \ref{Thm:kmnPST1} 2c, it is straightforward to show that $\frac{1}{3}((1+2e^{\ii 3\tau})+s(1-e^{\ii 3\tau}))$ has modulus one, since it is equal to $\frac{\overline{z}}{z}$ where $s=2+e^{\ii 3\tau} +3e^{\ii \tau}$. In particular, $\bu$ and $\bmu$ exhibit the same degrees of entanglement. In Theorem \ref{Thm:kmnPST1} 2d, there are choices of $\tau$ such that the moduli of $\frac{se^{\ii \tau}+\frac{s}{6}(2+e^{\ii 3 \tau}-3e^{\ii \tau})+\frac{1}{3}(1-e^{\ii 3 \tau})}{\frac{s}{6}(2+e^{\ii 3 \tau}-3e^{\ii \tau})+\frac{1}{3}(1-e^{\ii 3 \tau})} $ and $\frac{\frac{s}{6}(2+e^{\ii 3 \tau}-3e^{\ii \tau})+\frac{1}{3}(1-e^{\ii 3 \tau})} 
{se^{\ii \tau}+\frac{s}{6}(2+e^{\ii 3 \tau}-3e^{\ii \tau})+\frac{1}{3}(1-e^{\ii 3 \tau})}$ are different from that of $s$.

\subsection{Stars and generalizations}

The following examples of the star graph and generalizations allow for the advantage that one can specify the degree of entanglement (the magnitude of $s$) or specify the readout time, either of which may be of use depending on the situation. On one hand, the idea of quantum state transfer and maximal entanglement between two distant qubits is being explored \cite{ghosh2025quantum, aiyejina2023perfect}. On the other hand, optimal solutions for a speed limit for quantum state transfer, i.e.,  the minimal possible evolution time are of interest \cite{yung2006quantum, kay2010perfect}. Thus both research avenues are captured in our examples. 

\begin{ex}\label{eg:star}$*$ \rm{
Let $M=A$ and consider the star $K_{1,n}$ with central vertex $n+1$, where $n\geq 2$. Fix $\tau>0$. For $1\leq j\leq n$, consider $\bu=\be_j+s\be_{n+1}$ and $\bmu=\be_j-s\be_{n+1}$ where $s:=s(n,\tau)=-\ii \frac{1}{\sqrt{n}}\tan \left(\tau \sqrt{n}/2 \right).$
Replacing $s$ with $\frac{1}{s}$ in Theorem \ref{Thm:kmnPSTc1} 2c, we get PST from $\bu$ to $\bmu$ in $K_{1,n}$ for appropriate choices of $\tau\in\R$. By allowing $s$ to be complex, this example shows that PST can occur from $\bu$ to $\bmu$, unlike the case when $s$ is real. Indeed, if $s$ was chosen to be real and $\bv$ is the Perron eigenvector of $A$, then $\bu$ and $\bmu$ are not strongly cospectral because $\bv^T\bu\neq \pm \bv^T\bmu$. Hence, there is no PST between $\bu$ and $\bmu$ by Observation \ref{Thm:charPST}. Moreover, if $s$ is real, then $s$ cannot have arbitrarily large modulus whenever $\bu$ and $\bmu$ are strongly cospectral (see \cite[Corollary 9.3.3]{monterde2025quantum} for bounds on $s$ whenever $s$ is real, and $\bu$ and $\bmu$ are strongly cospectral $s$-pair states). However, in this particular example, we have $s=i \delta$, where $\delta \rightarrow\infty$ as $\tau\rightarrow\frac{k\pi}{\sqrt{n}}^+$ (resp., $\delta \rightarrow-\infty$ as $\tau\rightarrow\frac{k\pi}{\sqrt{n}}^-$) for any odd integer $k$. Thus, if we allow $s$ to be complex, then $s$ may have arbitrarily large modulus whenever $\bu$ and $\bmu$ are strongly cospectral, unlike the case when $s$ is real.

Alternatively, we may start with a specified purely imaginary value of $s$, then select the readout time $\tau$ so as to achieve PST from $\be_j+s\be_{n+1}$ 
to $\be_j-s\be_{n+1}.$ To this end, let $s=-\ii \delta$ where $\delta \in \R \backslash \{0\}.$ Choose $\tau$ such that $\tan\left(\tau\sqrt{n}/2\right) =\sqrt{n}\delta,$ i.e. $\tau=\frac{2}{\sqrt{n}}\arctan(\sqrt{n}\delta).$  For large values of $x,$ we may write $\arctan(x) = \frac{\pi}{2} +\sum_{j=1}^\infty\frac{(-1)^j}{(2j-1)x^{2j-1}}. $
We then choose 
\begin{equation}\label{eq:t0}
 \tau=  \frac{\pi}{\sqrt{n}} +2\sum_{j=1}^\infty\frac{(-1)^j}{(2j-1)\delta^{2j-1} n^j}. 
 \end{equation}
This series converges provided $n > \frac{1}{\delta^2}$. 
So for any real $\delta \ne 0$ and $n > \frac{1}{\delta^2},$ $K_{1,n}$ has PST from $\be_j-i\delta \be_{n+1}$ 
to $\be_j+i\delta\be_{n+1}$ at time $\tau$ given by \eqref{eq:t0}. This approach allows us to pre-specify the degree of entanglement between vertex states $j$ and $n+1$ in $\bu$, then choose the readout time so as to generate PST to $\bmu$. 
}
\end{ex}

\begin{ex}\label{eg:k2n}$*$\rm{
Here we consider a complete bipartite graph $K_{2,n}$. 
Setting $s=\ii \sqrt{n/2} \cot( t\sqrt{n/2}),$ we get PST
from $s\be_j+\be_{n+1}$ to $s\be_j-\be_{n+2}$ for each $j=1, \ldots, n$ by Theorem \ref{Thm:kmnPSTc1} 2g. Hence, if we choose the readout time $t,$ then we may choose $s$ to generate PST from $s \be_j+\be_{n+1}$ to $s \be_j-\be_{n+2}.$ Alternatively, 
suppose $\delta\in\R$ with $n> 2\delta^2>0$ and set 
$\tau:= 
 \frac{\pi}{\sqrt{2n}} +\sum_{j=1}^\infty\frac{(-1)^j 2^j\delta^{2j-1}}{(2j-1) n^j} $. A technique similar to that in Example \ref{eg:star}, 
we get PST from $\ii\delta \be_j+\be_{n+1}$ to $\ii\delta\be_j-\be_{n+2}$ 
at time $\tau.$ So, we may pre-specify the degrees of entanglement between states $j,n+1$ in their associated $s$-pair state, and $j,n+2$ in their associated $s$-pair state, and produce perfect $(s,r)$-state transfer involving the corresponding $s$-pair states 
at time $\tau.$
}
\end{ex}

\begin{ex}\label{ex:reg}$*$\rm{
Suppose that $X$ is an regular integral graph on $n$ vertices with degree $r$ such that $r^2+4n$ is not a perfect square (for example, we may take $X$ to be a conference graph on $n$ vertices, in which case $r=\frac{n-1}{2}$). Consider the cone $X \vee K_1$ on $X$, which has adjacency matrix $A=\begin{bmatrix}
 \hat A &  \uno \\ \uno^T &0 
\end{bmatrix}$, where $\hat A$ is the adjacency matrix of $X$.  Denote the eigenvalues of $\hat A$ by $r, \lambda_j, j=2, \ldots, n.$ The eigenvalues of $A$ consist of $\lambda_2, \ldots, \lambda_n,$ and $\gamma_1, \gamma_2 = \frac{1}{2}(r \pm \sqrt{r^2+4n}).$ For each $j=2, \ldots, n,$ the eigenprojection matrix for $\lambda_j$ for $A$ is given by $\begin{bmatrix} E_{\lambda_j}&0\\0&0\end{bmatrix},$ where $E_{\lambda_j}$ is the corresponding eigenprojector for $\hat A.$ Observe also that for $k=1,2,$ a unit eigenvector for $\gamma_k$ is given by $$
\sqrt{\frac{n}{n+\gamma_k^2}}\begin{bmatrix} \frac{\gamma_k}{n}\uno \\ 1\end{bmatrix}. 
$$
Using the facts that $\sum_{\lambda \in \sigma(\hat A)\backslash \{r\} } E_{\lambda} =I-\frac{1}{n}J,$  that $e^{2\pi \ii \lambda_j}=1, j=2, \ldots, n,$ and that $ e^{\ii \pi r}=(-1)^r,$ 
it now follows that the transition matrix at $2\pi$ is given by $$U(2 \pi) = 
\begin{bmatrix} 
    I-\frac{1}{n}J + \frac{(-1)^r}{n} \left( \cos(\theta) + \frac{\ii r}{\sqrt{r^2+4n}}sin(\theta)\right) J & \frac{\ii 2(-1)^r}{\sqrt{r^2+4n}}sin(\theta)\uno \\  \frac{\ii 2(-1)^r}{\sqrt{r^2+4n}}sin(\theta)\uno^T & 
    (-1)^r\left( \cos(\theta) - \frac{\ii r}{\sqrt{r^2+4n}}sin(\theta) \right)
\end{bmatrix}, 
$$
where $\theta = \pi \sqrt{r^2+4n}.$ Define the complex number $s$ by $$
s = \frac{1}{2n}\left( -r + \ii \sqrt{r^2+4n} \left(\frac{\cos(\theta) - (-1)^r}{\sin(\theta)}\right) \right). 
$$ An uninteresting computation reveals that  $U(2\pi)(\be_j + s\be_{n+1}) = \be_j + \overline{ s}\be_{n+1}$ for any $j\in V(G)$. Thus we have an example of perfect $(s,r)$-state transfer with $s$-pair states   that are related via complex conjugation. 

We note in passing that if $r^2+4n$ happens to be a perfect square (this happens, for instance, if $X$ is the Petersen graph), then $U(2 \pi)=I$ and $X\vee K_1$ is a periodic graph.}    
\end{ex}


\section{Using fractional revival for perfect $(s,r)$-state transfer}\label{sec:FR}

Let $X$ be a graph and $S\subseteq V(X)$ with $|S|=k$. We say that $X$ admits $k$-\textit{fractional revival} amongst the vertices in $S$ if the transition matrix of $X$ is block diagonal at some $\tau>0$ with one block corresponding to the vertices in $S$. Equivalently, $U(\tau)$ restricted to the entries indexed by the vertices in $S$ is a unitary matrix. In particular, if $k=2$, then we say that the two vertices in $S=\{a,b\}$ admit \textit{fractional revival}, which is equivalent to saying that for some $\alpha,\beta\in\C$ such that $\alpha^2+\beta^2=1$, we have
\begin{equation*}
U(\tau)\be_a=\alpha\be_a+\beta\be_b.
\end{equation*}

In this section, we demonstrate how fractional revival can be utilized to generate examples of perfect $(s,r)$-state transfer. Here, we make use of the Cartesian product of two graphs $X$ and $Y$, which is the graph $X\square Y$ whose adjacency matrix is $A(X)\otimes I+I\otimes A(Y)$. It is known that the transition matrix of $X\square Y$ relative to $A$ or $L$ is given by
\begin{equation*}
U_X(t)\otimes U_Y(t).
\end{equation*}

In \cite{chan2019quantum}, a characterization of fractional revival and generic constructions of graphs with $k$-fractional revival are provided. In particular, one of their constructions \cite[Corollary 3.2]{chan2019quantum} uses the Cartesian product operation. We adopt the idea in this construction to generate $k$-fractional revival more generally from periodic graphs.

\begin{prop}
\label{propCartprod}
Suppose $X$ is a periodic graph at time $\tau$, and $Y$ is not a periodic graph at time $\tau$. Then for each $a\in V(X)$, $X\square Y$ admits $|V(Y)|$-fractional revival amongst the vertices in $\{(a,b):b\in V(Y)\}$.
\end{prop}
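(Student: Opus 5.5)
The plan is to use the factorization of the transition matrix of a Cartesian product stated just before the proposition, namely $U_{X\square Y}(t)=U_X(t)\otimes U_Y(t)$. Since $X$ is periodic at $\tau$, the paper recalls that $U_X(\tau)=\gamma I$ for some unit complex number $\gamma$. Substituting gives
\begin{equation*}
U_{X\square Y}(\tau)=\gamma\, I_{|V(X)|}\otimes U_Y(\tau).
\end{equation*}
I would index the vertices of $X\square Y$ as pairs $(a,b)$ and order them lexicographically with the $X$-coordinate first. With this ordering, $I\otimes U_Y(\tau)$ is block diagonal with $|V(X)|$ diagonal blocks, each equal to $U_Y(\tau)$. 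The block for a fixed $a\in V(X)$ is indexed exactly by $S_a=\{(a,b):b\in V(Y)\}$.

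I would then check the definition of $k$-fractional revival with $k=|V(Y)|$. The matrix $U_{X\square Y}(\tau)$ is block diagonal, and one block corresponds to $S_a$. Its restriction to $S_a$ is $\gamma U_Y(\tau)$, which is unitary because $U_Y(\tau)$ is unitary and $|\gamma|=1$. Concretely, for each $b$ we have
\begin{equation*}
U_{X\square Y}(\tau)(\be_a\otimes\be_b)=\gamma\,\be_a\otimes U_Y(\tau)\be_b,
\end{equation*}
which is supported on $S_a$. This holds for every $a$, so the conclusion follows.

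The hypothesis that $Y$ is not periodic at $\tau$ is what keeps the revival from being trivial. If $Y$ were periodic at $\tau$, then $U_{X\square Y}(\tau)$ would be a scalar multiple of the identity, and $X\square Y$ would simply be periodic rather than exhibiting genuine revival within $S_a$. I would remark that, because $U_Y(\tau)$ is not a scalar matrix, the block on $S_a$ is not scalar, so some vertex state in $S_a$ genuinely spreads within $S_a$ or picks up distinct phases.

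I do not expect a real obstacle here. The only care needed is to make the reordering of coordinates explicit, so that "block diagonal" holds in the sense of the definition. It also needs to be clear that the argument works for whichever Hamiltonian ($A$ or $L$) is used, since the Kronecker factorization of $U$ holds in both cases.
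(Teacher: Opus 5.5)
Your proposal is correct and is essentially the paper's own proof: it factors $U_{X\square Y}(\tau)=U_X(\tau)\otimes U_Y(\tau)=\gamma\,(I\otimes U_Y(\tau))$ and reads off the diagonal block indexed by $\{(a,b):b\in V(Y)\}$. The paper states nontriviality as ``block diagonal but not diagonal,'' and your ``not scalar'' can be strengthened to that because $Y$ is connected by the standing assumption: a diagonal $U_Y(\tau)$ commutes with the Hamiltonian, so its entries agree across every edge and it equals $\gamma I$, which non-periodicity of $Y$ rules out.
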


\begin{proof}
Since $X$ is a periodic graph at time $\tau$, we have $U(\tau)=\gamma I$. Thus,
$U_X(\tau)\otimes U_Y(\tau)=\gamma(I\otimes U_Y(\tau))$. Since $Y$ is not a periodic graph at time $\tau$, the matrix $I\otimes U_Y(\tau)$ is block diagonal but not diagonal.
\end{proof}
Note that Proposition \ref{propCartprod}
holds for regular graphs $X$ and $Y$ such that $A(X)$ all have integer eigenvalues and $A(Y)$ has a non-integer eigenvalue, in which case it applies to both $A$ and $L$.

To this end, assume that $Y$ has PST between vertices $0$ and $1$ at time $\tau$ with phase factor $\gamma$. That is, $U_{Y}(\tau)\be_{0}=\gamma\be_{1}$. Observe that
\begin{equation}
\label{Eq:Cart}
\begin{split}
U_{X\square Y}(\tau)(\be_{(0,0)}+s\be_{(1,0)})&=\big(U_{X}(\tau)\otimes U_{Y}(\tau)\big)
\big((\be_{0}\otimes \be_{0})+s(\be_{1}\otimes\be_{0})\big)\\
&=U_{X}(\tau)\be_{0}\otimes U_{Y}(\tau)\be_{0}+sU_{X}(\tau)\be_{1}\otimes U_{Y}(\tau)\be_{0}\\
&=\gamma \big(U_{X}(\tau)\be_{0}\otimes\be_{1}+sU_{X}(\tau)\be_{1}\otimes \be_{1}\big)\\
&=\gamma \big(U_{X}(\tau)\be_{0}+sU_{X}(\tau)\be_{1}\big)\otimes\be_{1}.
\end{split}
\end{equation}
Using (\ref{Eq:Cart}), we generate examples of perfect $(s,r)$-state transfer from graphs with $k$-fractional revival.

\begin{ex}
\label{4.2}
{\rm{
 $*$ Suppose $X$ has $3$-fractional revival amongst vertices $\{0,1,2\}$ at time $\tau$. We may write
$$U_{X}(\tau)\be_{0}=\begin{bmatrix}\alpha \\ \beta \\ \omega\end{bmatrix}\quad \text{and}\quad U_{X}(\tau)\be_{1}=\begin{bmatrix}\beta \\ \eta \\ \zeta\end{bmatrix}$$
for some $\alpha,\beta,\omega,\eta,\zeta\in\C$, and so
$$U_{X}(\tau)\be_{0}+sU_{X}(\tau)\be_{1}=\begin{bmatrix}\alpha+s\beta\\ \beta+s\eta \\ \omega+s\zeta\end{bmatrix}.$$
Thus, if $s=-\frac{\omega}{\zeta}$, then $U_{X}(\tau)\be_{0}+sU_{X}(\tau)\be_{1}=\begin{bmatrix}\alpha-\frac{\beta\omega}{\zeta}\\ \beta-\frac{\eta\omega}{\zeta} \\ 0\end{bmatrix}$. Invoking (\ref{Eq:Cart}), we get
\begin{center}
$U_{X\square Y}(\tau)(\be_{(0,0)}+s\be_{(1,0)})
=\gamma \begin{bmatrix}\alpha-\frac{\beta\omega}{\zeta}\\ \beta-\frac{\eta\omega}{\zeta} \\ 0\end{bmatrix}\otimes \be_{1}=\gamma\bigg(\big(\alpha-\frac{\beta\omega}{\zeta}\big)\be_{(0,1)}+\big(\beta-\frac{\eta\omega}{\zeta}\big)\be_{(1,1)}\bigg).$
\end{center}
Therefore, PST occurs from $\be_{(0,0)}+s\be_{(1,0)}$ to $\big(\alpha-\frac{\beta\omega}{\zeta}\big)\be_{(0,1)}+\big(\beta-\frac{\eta\omega}{\zeta}\big)\be_{(1,1)}$ at time $\tau$. Similarly, if $s=-\frac{\beta}{\eta}$ and $s=-\frac{\alpha}{\beta}$, then PST occurs from $\be_{(0,0)}+s\be_{(1,0)}$ to the vectors $\big(\alpha-\frac{\beta^2}{\eta}\big)\be_{(0,1)}+\big(\omega-\frac{\beta\zeta}{\eta}\big)\be_{(2,1)}$ and $\big(\beta-\frac{\alpha\eta}{\beta}\big)\be_{(1,1)}+\big(\omega-\frac{\alpha\zeta}{\beta}\big)\be_{(2,1)}$, respectively, at time $\tau$.}}

{\rm{
Next we examine the relationship between $r$ and $s$. Note that $r$ can be written as $r=\frac{\beta + \eta s}{\alpha + \beta s} = \frac{\beta \xi - \eta \omega}{\alpha \xi- \beta \omega}.$ Observe that $\beta \xi - \eta \omega$ is the determinant of the submatrix of $U(\tau)$ on rows $1,2$ and columns $2,3$, while $\alpha \xi- \beta \omega$ is the determinant of the submatrix of $U(\tau)$ on rows $1,2$ and columns $1,3.$ Recalling that $U(\tau)$ is complex, symmetric and unitary, and applying the cofactor formula for the inverse, we find that $\beta \xi - \eta \omega = \overline{\omega} \det(U(\tau))$ while  $\alpha \xi- \beta \omega=- \overline{\xi} \det(U(\tau)).$ Hence 
\begin{equation*}
r=\frac{\overline{\omega} \det(U(\tau))}{- \overline{\xi} \det(U(\tau))} = \overline{s},
\end{equation*}
implying that that the degrees of entanglement in this case is preserved.}}
\end{ex}

\begin{ex} {\rm{
$*$ Suppose $X$ has $(\alpha,\beta)$-fractional revival between vertices $0$ and $1$ at time $\tau$ for some $\alpha,\beta\in\C$ such that $\alpha^2+\beta^2=1$ and $\beta\neq 0$. Then $U_{X}(\tau)\be_{0}=\alpha\be_0+\beta\be_1$ and $U_{X}(\tau)\be_{1}=\beta\be_0+ \delta\be_1,$
where $\delta=-\frac{\overline{\alpha}\beta}{\overline{\beta}}$. The same argument as in Example \ref{4.2} yields
\begin{equation*}
\begin{split}
U_{X\square Y}(\tau)(\be_{(0,0)}+s\be_{(1,0)})
=\gamma \begin{bmatrix}\alpha+s\beta\\ \beta+s\delta \\ \textbf{0}\end{bmatrix}\otimes \be_{1}=\gamma\bigg(\big(\alpha+s\beta\big)\be_{(0,1)}+\big(\beta+s\delta\big)\be_{(1,1)}\bigg).
\end{split}
\end{equation*}
Thus, if $\beta=\frac{1-\alpha}{s}$ and $r=\beta+s\delta$, then PST occurs from $\be_{(0,0)}+s\be_{(1,0)}$ to $\be_{(0,1)}+r\be_{(1,1)}$ at time $\tau$. Since $r=\beta + s\delta =\beta -\frac{\overline{\alpha} \beta s}{\overline{\beta}}$ and $\beta s =1-\alpha,$ it follows that $r=\frac{\beta \overline{\beta} + \alpha \overline{\alpha}-\overline{\alpha}}{\overline{\beta}}= \frac{1-\overline{\alpha}}{\overline{\beta}}.$ As $s=\frac{1-{\alpha}}{{\beta}},$ we see that $r=\overline{s}.$ }}
\end{ex}


\begin{ex} {\rm{
$\dagger$ 
As a special case of the analysis above for fractional revival, suppose that $Y$ has PST from $0$ to $1$ at time $\tau$ with phase factor $\gamma$, and that $s\in \C\setminus\{0\}.$ From (\ref{Eq:Cart}), we find that $$U_{K_2\square Y}(\tau)(\be_{(0,0)}+s\be_{(1,0)}) = \gamma(  (\cos \tau + \ii s\sin \tau)\be_{(0,1)}+(\ii\sin \tau + s\cos \tau)\be_{(1,1)}),
$$ i.e. there is PST from $\be_{(0,0)}+s\be_{(1,0)}$ to $(\cos \tau + \ii s\sin \tau)\be_{(0,1)}+(\ii \sin \tau + s\cos \tau)\be_{(1,1)}$ at time $\tau.$ Set $s=s_1+ \ii s_2$ with $s_1, s_2 \in \R$. Then $|\cos \tau + \ii s\sin \tau|^2=\cos^2\tau + |s|^2 \sin^2 \tau -s_2 \sin 2\tau$ and 
$|\ii\sin \tau + s\cos \tau|^2=\sin^2 \tau +|s|^2\cos^2 \tau+s_2\sin 2\tau$. 
In particular there is maximum entanglement in the image state at time $\tau$ if and only if $\cos^2\tau + |s|^2 \sin^2 \tau -s_2 \sin 2\tau = \sin^2 \tau +|s|^2\cos^2 \tau+s_2\sin 2\tau,$ i.e. if and only if $ (1-|s|^2)\cos 2\tau =2s_2 \sin 2\tau.$ This last equality holds in precisely one of three cases: i) $|s|=1$ and either $s\in \R$ or $\tau$ is an integer multiple of $\frac{\pi}{2}$; ii) $s_2\ne 0$ and $\tan 2\tau = \frac{1-|s|^2}{s_2}$; iii) $s_2=0, |s|\ne 1$ and $\tau$ is an odd multiple of $\frac{\pi}{4}.$ }}

{\rm{
As an example we may take $Y$ to be the weighted $d$-cube with all edge weights equal to $2$ and vertices $0, 1$ as antipodal vertices in $Y$. Then for any $s\in \R\setminus \{-1,1\}$ there is PST from 
$\frac{1}{\sqrt{1+s^2}}(\be_{(0,0)}+s\be_{(1,0)})$ to 
$ \frac{1}{\sqrt{2(1+s^2)}}(1+ \ii s) \be_{(0,1)}+  \frac{1}{\sqrt{2(1+s^2)}}(\ii+s) \be_{(1,1)} $ at time $\frac{\pi}{4}.$ Observe that in the image state,  $ \be_{(0,1)}$ and $ \be_{(1,1)}$ are maximally entangled, regardless of which $s \in \C\setminus\{0\}$ is chosen.}}
\end{ex}

\section{Optimizing $(s,r)$-state transfer}\label{sec:optim}

It is well-known that PST between vertex states is rare---indeed,  for any   $k\in \mathbb N$ there are
only finitely many unweighted graphs with maximum valency $k$ on which perfect
state transfer occurs \cite{godsil2012can}. In the study of edge state transfer (where there is an edge between the vertices of the $s$-pair states) and plus state transfer (where the coefficients of the $s$-pair states are $+1$), the number of graphs with  Laplacian pair-state PST and unsigned Laplacian plus
state PST, although significantly higher than the number of graphs with  adjacency vertex
state PST, are still rather low \cite[Table 3]{Chen2020PairST}. Given the relatively low frequency of PST in these settings, optimization theory provides a promising route forward. In much of the experimental work, perfect fidelity (of exactly 1) is arguably not possible, and there is a growing body of recent work on the topic of high-fidelity state transfer \cite{chen2016asymptotically, coutinho2025peak, lippner2024quantifying, lippner2025strong, vinet2024quantum}. We therefore consider in this section the situation   when PST is not possible---the question becomes, can we maximize it? This is different
than PGST in that we do not have (or want) an infinite sequence of times where the fidelity gets asymptotically close to 1. 

Recently, the notion of peak state transfer was introduced \cite{coutinho2025peak}, whereby one is interested in the highest fidelity possible between two given vertices in a continuous-time quantum walk. For a matrix $M$ (resp., vector $\bu$) with complex entries, let $|M|$ (resp., $|\bu|$) be the matrix (resp., vector) obtained from $M$ (resp., $\bu$) by taking the absolute value of all its entries. Given an $n\times n$ symmetric matrix $H$ with spectral decomposition $H=\sum_{j=0}^d\lambda_jE_j$, its \textit{bounding matrix} is defined as 
\begin{equation*}
B(H)=\sum_{j=0}^d|E_j|.
\end{equation*}
Note that for any pair of vertices $u$ and $v$ in $X$, we have $B(H)_{u,v}\geq |U(t)_{u,v}|$ for all $t\in\R$. If there exists a time $\tau$ such that the inequality is equality, then there is \textit{peak state transfer} between $u$ and $v$  at time $\tau$.

It is natural to examine the spectral characterization of peak state transfer found in \cite{coutinho2019perfect} to produce analogous results in the setting of s-pair state transfer. Let 
$H =\sum_{j=0}^d\lambda_j E_j$ be the spectral decomposition of $H$. Then for any 
pure states $\bu$ and $\bmu$, we have 
\begin{equation*}
\left|\bmu^TU(t)\bu\right|= \left|\bmu^T\sum_{j=0}^de^{\ii t\lambda_j}E_j\bu\right|\leq \sum_{j=0}^d\left|e^{\ii t\lambda_j}\bmu^TE_j\bu\right|.
\end{equation*}
Since each $e^{\ii t\lambda_j}$ lies on the unit circle, we have 
\begin{equation*}
\left|\bmu^TU(t)\bu\right|\leq \sum_{j=0}^d\left|\bmu^TE_j\bu\right|\leq|\bmu|^T\left(\sum_{j=0}^d\left|E_j\right|\right)|\bu|=|\bmu|^TB(H)|\bu|
\end{equation*}
for all $t$. If there is a time $\tau>0$ such that  $|\bmu^TU(\tau)\bu|=|\bmu|^TB(H)|\bu|$ for pure states $\bmu\neq \bu$, we say that there is \emph{peak $(s,r)$-state transfer} from $\bmu$ to $\bu$. Unfortunately, although we can rotate the states so that at least one entry is real, we cannot always rotate in such a way to force all entries of both $\bu$ and $\bmu$ to be real, thus in the general complex-valued setting, results strictly analogous to \cite{coutinho2025peak} remain elusive (although they are attainable in the real-valued setting).

We now present a different method for optimizing state transfer. Below is an algorithm for finding a vector with two nonzero entries that maximizes the fidelity from an $s$-pair state at a given time. 
Consider a graph with adjacency matrix $A$, and suppose that we are given the $s$-pair state $\bu= \frac{1}{\sqrt{1+|s|^2}}(\be_a+s\be_b)$. Fix a time $t>0,$ and let $U(t)=e^{\ii tA}.$ Which vector with two nonzero entries maximizes the fidelity of transfer to $\bu$ at time $t$? This is different than the notion of peak $(s,r)$-state transfer in that for the latter, the input and output $s$-pair states are fixed and one looks for a time for which the fidelity is maximized. In this way, the algorithm below is independent of the notion of peak $s$-pair state transfer discussed above. 

To answer that question, set ${\bf{z}}=U(t)\bu, $ and consider the vector $\bv=v_\alpha \be_\alpha +v_\beta \be_\beta,   $
where $\alpha, \beta$ are vertices and $v_\alpha, v_\beta \in \C$ with $|v_\alpha|^2+ |v_\beta|^2=1. $ Observe that $\bv^*U(t)\bu = \bv^* {\bf{z}} = \overline{v_\alpha} z_\alpha + \overline{v_\beta} z_\beta.  $ It now follows from the Cauchy-Schwarz inequality that $|\bv^*U(t)\bu|^2 \le |z_\alpha|^2+ |z_\beta|^2 ,$ with equality if and only if up to a phase factor, we have $\bv=\frac{1}{\sqrt{|z_\alpha|^2+ |z_\beta|^2}}  (z_\alpha \be_\alpha +z_\beta \be_\beta).$

The observations above yield the following strategy for maximizing the fidelity of state transfer from $\bu$ to a vector with two nonzero entries at a given time $t$.

\begin{algorithm}
\caption{An algorithm to maximize fidelity}\label{alg}
\begin{algorithmic}
\State 1. Find ${\bf{z}}=U(t)\bu.$
\State 2. Identify distinct indices $\alpha, \beta$ so that $|z_\alpha|^2+ |z_\beta|^2 = \max\{ |z_p|^2+ |z_q|^2|p \ne q\}. $ 
\State 3. Let  $\bv=\frac{1}{\sqrt{|z_\alpha|^2+ |z_\beta|^2}}  (z_\alpha \be_\alpha +z_\beta \be_\beta).$ 
\State \ The maximum fidelity is then given by $|z_\alpha|^2+ |z_\beta|^2.$
\end{algorithmic}
\end{algorithm}
The vector $\bv$ constructed via Algorithm~\ref{alg} maximizes (over all vectors with two nonzero entries) the fidelity to $\bu$ at time $t.$

\begin{ex}{\rm{
Consider $P_5$ and take our $s$-pair state as $\bu= \frac{1}{\sqrt{2}}(\be_2+\be_4).$ At any time $t$ we have $$U(t)\bu = \begin{bmatrix}
    \frac{\ii}{\sqrt{6}} \sin(\sqrt{3}t)\\
    \frac{1}{\sqrt{2}} \cos(\sqrt{3}t)\\
    \ii\sqrt{\frac{2}{3}} \sin(\sqrt{3}t)\\
    \frac{1}{\sqrt{2}} \cos(\sqrt{3}t)\\
    \frac{\ii}{\sqrt{6}} \sin(\sqrt{3}t)
\end{bmatrix}.$$ 
For $t\in [\frac{\pi}{3\sqrt{3}}, \frac{\pi}{2\sqrt{3}}]$ we have $\frac{1}{\sqrt{6}} \sin(\sqrt{3}t) \ge \frac{1}{\sqrt{2}} \cos(\sqrt{3}t) \ge 0,$ so that the entry of 
$U(t)\bu$ of largest modulus is in the third position, while 
the entries of 
$U(t)\bu$ of second-largest modulus are in the first and fifth  positions. It now follows that the vectors $\frac{1}{\sqrt{5}}\be_1 +\frac{2}{\sqrt{5}}\be_3 $ and $\frac{1}{\sqrt{5}}\be_5 +\frac{2}{\sqrt{5}}\be_3 $ both maximize the fidelity of transfer from $\bu,$ with fidelity equal to $\frac{5}{6}\sin^2(\sqrt{3}t).$ In particular, at $t=\frac{\pi}{2\sqrt{3}},$ the fidelity is $\frac{5}{6}=0.8\overline{3}.$ }}
\end{ex}

\begin{ex}\rm{ We consider the star $K_{1,n}$ and continue with the notation of Example \ref{eg:star}. Fix $\tau, s \in \mathbb{R}$ and let $\theta = t\sqrt{n}.$ Let $\bu = \frac{1}{1+s^2}(\be_1+ \ii s\be_{n+1})$ and $\bmu(r) =\frac{1}{1+r^2}(\be_1+ \ii r\be_{n+1})$; we seek to choose $r$ so that the fidelity $|\bmu^*U(\tau)\bu|^2$ is maximized. A computation shows that 
$$|\bmu^*U(\tau)\bu|^2 = \frac{\bigg(1+\frac{ \cos(\theta)-1}{{n}} - \frac{s\sin(\theta)}{\sqrt{n}} - r\big(\frac{\sin(\theta)}{\sqrt{n}} +s\cos(\theta)\big)\bigg)^2 }{(1+s^2)(1+r^2)}. 
$$
If $1+\frac{ \cos(\theta)-1}{{n}} - \frac{s\sin(\theta)}{\sqrt{n}}=0,$ then 
$|\bmu^*U(\tau)\bu|^2$ is increasing in $r,$ with limiting value 
$\frac{(\frac{\sin(\theta)}{\sqrt{n}} +s\cos(\theta))^2 }{(1+s^2)} $ as $r \rightarrow \infty.$ On the other hand, if $1+\frac{ \cos(\theta)-1}{{n}} - \frac{s\sin(\theta)}{\sqrt{n}}\ne 0,$  a calculus exercise shows that $|\bmu^*U(\tau)\bu|^2$  is maximized when $$r = -
\frac{\frac{\sin(\theta)}{\sqrt{n}} +s\cos(\theta)}{1+\frac{ \cos(\theta)-1}{{n}} - \frac{s\sin(\theta)}{\sqrt{n}}}
$$ and that the maximum value for the fidelity is given by
\begin{eqnarray*}
&& 1 -\frac{n-1}{n(1+s^2)}(1-\cos(\theta))\left(s^2(1+\cos(\theta)) +2s\frac{\sin(\theta)}{\sqrt{n}} + \frac{1-\cos(\theta)}{n}\right) = \\
&& \begin{cases} 1 -\frac{n-1}{n(1+s^2)}(1-\cos(\theta)^2) \left(s+ \frac{\sin(\theta)}{\sqrt{n}(1+\cos(\theta))}\right )^2, & \text{if $\cos(\theta)\ne -1$} \\
1-\frac{4(n-1)}{n^2(1+s^2)},  & \text{if $ \cos(\theta)= -1$.} \end{cases} 
\end{eqnarray*}
}
\end{ex}

\section{Sensitivity analysis}\label{sec:sens}

Due to inherent manufacturing and measurement errors, \emph{perfect} state transfer (of any kind--vertex or $s$-pair states) is only perfect theoretically. Small errors, either in setting up the physical system or reading out the spin at exactly the right time, cause the fidelity function to deviate from 1. The shape of the fidelity function $f$ determines the magnitude of these deviations: if $\tau$ is the PST readout time, then $\frac{d f}{d t}\big|_{\tau} =0$ and 
 the second derivative of $f$ is negative on some interval $(\tau -  \epsilon, \tau + \epsilon).$ Thus $f$ is concave down on that interval and the magnitude of the second derivative determines the sensitivity of $f$ to readout time errors in a neighbourhood of $\tau$, i.e. 
 the larger in magnitude the second derivative, the greater the sensitivity.

Precise formulas for derivatives of, as well as bounds for, the probability of (vertex) state transfer with respect to the edge weights and the readout time were derived in \cite{kirkland2015sensitivity,gordon2016bounds}. 
A sensitivity analysis in \cite{godsil2025perfect} produced bounds on the second derivative of the fidelity function in the setting of general real states $\bx$ and $\by$. Here, we apply the latter work to our setting of (complex) $s$-pair states to obtain some results specific to our setting. Our main contribution here is that the second derivative is computable from information from the graph, which is a distinct advantage over the more general analysis of \cite{godsil2025perfect}. 

Suppose that $\bx, \by \in \C^n,$ and that for some symmetric Hamiltonian $M$ and $\tau>1,$ we have $e^{\ii  \tau M}\bx = \gamma \by$ for some unit $\gamma \in \C.$ Set $f(t)=|\by^*e^{\ii  tM}\bx|^2$ which is the fidelity of transfer from $\bx$ to $\by$. As in \cite{godsil2025perfect} (the proof of which  follows that in  \cite{kirkland2015sensitivity}) we find that 
\begin{equation*}
\frac{d^k f}{d t^k}\bigg|_{\tau} = \begin{cases} 
0, & \text{if $k$ is odd} \\
(-1)^{\frac{k \mod 4}{2}}\sum_{j=0}^k(-1)^j {k \choose j } \by^*M^j\by\by^*M^{k-j}\by,  & \text{if $k$ is even.} \end{cases} 
\end{equation*}
In particular, $\frac{d f}{d t}\big|_{\tau} =0$ and $\frac{d^2 f}{d t^2}\big|_{\tau} = -2(\by^*M^2\by - (\by^*M\by)^2).$ Observe that if $\by$ is an $s$-pair state, then the second derivative of the fidelity can be computed by from a $2\times 2$ principal submatrix of the Hamiltonian $M$, and a $2 \times 2$ principal submatrix of $M^2$. 

For concreteness, let $\by=\frac{1}{\sqrt{1+|r|^2}}(\be_\alpha +r\be_\beta)$ for indices $\alpha, \beta$ and $r \in \C\backslash\{0\}$. If we are using the adjacency matrix $A$ as the Hamiltonian, then $\by^*A^2\by - (\by^*A\by)^2$ is readily determined to be given by 
\begin{equation*}
\frac{1}{1+|r|^2}\left(d_\alpha + |r|^2 d_\beta + 2c Re(r)- \frac{4 a_{\alpha, \beta}(Re(r))^2 }{1+|r|^2}\right), 
\end{equation*}
where $d_\alpha, d_\beta$ are the degrees of $\alpha, \beta$ respectively, and $c$ is the number of common neighbours for  $\alpha$ and $ \beta.$ Similarly, if we are using the Laplacian matrix $L$ as the Hamiltonian, we find that $\by^*L^2\by - (\by^*L\by)^2$ equals $$
\frac{1}{1+|r|^2}\left( d_\alpha^2+d_\alpha +|r|^2(d_\beta^2+d_\beta)+2cRe(r)-2(d_\alpha+d_\beta)Re(r)
-\frac{(d_\alpha +|r|^2d_\beta -2a_{\alpha,\beta}Re(r))^2}{1+|r|^2}\right).$$

\begin{ex}\label{eg:starsense} {\rm{
Let $\bu = \be_j+s\be_{n+1}$ and $ \bmu=\be_j-s\be_{n+1}$ as in Example \ref{eg:star} with $1\leq j\leq n,$ and $(-r=) s=-\frac{\ii}{\sqrt{n}}\tan\left(\frac{\tau \sqrt{n}}{2}\right)$ for some $\tau>0.$ The degree of vertex $j$ is 1, that of vertex $n+1$ is $n,$ and they have no common neighbours. 
Using those facts we find that $$\frac{d^2 f}{d t^2}\bigg|_{\tau} =  - 
\frac{2}{1+\frac{1}{n}\tan^2\left(\frac{\tau \sqrt{n}}{2}\right)}\left(
1+\tan^2\left(\frac{\tau \sqrt{n}}{2} \right)\right) = -
\frac{2n}{(n-1)\cos^2\left(\frac{\tau \sqrt{n}}{2}\right)+1}.$$ In particular,     $\frac{d^2 f}{d t^2}\big|_{\tau} \in [-2n,-2].$  }}
\end{ex}

\begin{ex}\label{eg:k2nsense} \rm{Let $\bu=\be_{n+1}+s\be_j$ and $\bmu=-\be_{n+2}+s\be_j$ as in Example \ref{eg:k2n} with $1\leq j\leq n,$ and 
$s=-\ii \sqrt{\frac{n}{2}} \cot\left( \sqrt{\frac{n}{2}} \tau \right)$ for some $t>0.$ Vertex $j$ has degree $2$, vertex $n+2$ has degree $n$, and $s$ is purely imaginary. It now follows that 
$$\frac{d^2 f}{d t^2} = -\frac{4n}{(n-2)\cos^2\left(\tau \sqrt{\frac{n}{2}}\right)  + 2} \in [-2n, -4]. 
$$

}
\end{ex}

\begin{ex}\label{ex:sensitivity_close}\rm{
Recall the  construction in section 3 where the graph $Y$ has PST from 0 to 1 at time $\tau$, and we found that for $K_2 \square Y $ and any $s \in \mathcal{C}\backslash \{0\}$ there is PST from $\be_{(0,0)} + s \be_{(1,0)}$ to  $  (\cos \tau +i s \sin \tau)\be_{(0,1)} + (i \sin \tau + s \cos \tau)\be_{(1,1)}$ at time  $\tau.$ Suppose that vertex 1 has degree $d$ in $Y$, so that vertices $(0,1)$ and $(1,1)$ have degree $d+1$ in  $K_2 \square Y $, and no common neighbours. Our value of $r$ can be taken to be $\frac{\ii \sin \tau + s \cos \tau}{\cos \tau +\ii s \sin \tau}.$ It follows that for $K_2 \square Y $, the second derivative of the fidelity at $\tau$ is given by $-2d-2+8\left(\frac{Re(r)}{1+|r|^2}\right)^2.$ Writing $s$ as $s_1 + is_2, s_1, s_2 \in \mathbb{R},$ a computation shows that $\frac{Re(r)}{1+|r|^2} = \frac{s_1}{1+s_1^2+s_2^2}.$ Thus the second derivative of the fidelity in 
$K_2 \square Y$ is $-2d-2+ 8 \left( \frac{s_1}{1+s_1^2+s_2^2} \right)^2$, which is reasonably close $-2d$, the second derivative of the fidelity in the original graph $Y$. 
}

\end{ex}

\begin{ex}\label{ex:sensitivity_improvement}{\rm{
Let $n$ be divisible by $4$. The cocktail party graph $CP(n)$ on $n$ vertices is regular of degree $n-2$, has integral spectrum, and is known to have PST at time $\frac{\pi}{2}$ between each pair of non-adjacent vertices. The second derivative of the fidelity at time $\frac{\pi}{2}$ between non-adjacent pairs is equal to $-2(n-2)$. 

Next we consider the discussion of Example \ref{ex:reg} for $CP(n)\vee K_1.$ Labelling the vertex of degree $n$ by $n+1$ and the rest by $1, \ldots, n$, we see that for $\theta = \pi \sqrt{n^2+4}$ and 
$s=\frac{1}{2n}\left(-(n-2) + \ii \sqrt{n^2+4}\left(\frac{\cos (\theta)-1}{\sin(\theta)}\right)\right)$, there is perfect $(s,r)$-state transfer from $\frac{1}{1+|s|^2}(\be_j+s\be_{n+1}) $
to $\frac{1}{1+|s|^2}(\be_j+\overline{s}\be_{n+1} )$ at time $2\pi$. The corresponding second derivative of the fidelity (evaluated at time $2\pi$) is given by $$\frac{1}{1+|s|^2} \left(\frac{3n-4}{n} + |s|^2n -\frac{(n-2)^2}{n^2(1+|s|^2)}\right).$$ We have $|s|^2 = \frac{1}{4n^2}\left((n-2)^2+(n^2+4)\left(\frac{(\cos (\theta)-1)^2}{\sin(\theta)^2}\right)\right) = \frac{1}{4n^2}\left( -4n +\frac{2(n^2+4)}{1+\cos(\theta)}\right). $ Observe that for large values of $n, \theta \approx \pi n + \frac{2\pi}{n},$ so that $\cos(\theta) \approx 1,$ and $|s|^2 \approx \frac{1}{4}.  $ It now follows that the second derivative of the fidelity of $s$-pair state transfer is roughly $-2(\frac{n}{5} + \frac{76}{25})$ for large $n.$ This represents an approximately five-fold improvement on the corresponding second derivative for $CP(n).$

}}
    
\end{ex}

\begin{ex}{\rm{ Here we consider the sensitivity in the context of Example \ref{ex:p4'}, and maintain the notation of that example. We have $$\bmu_k = \cos\left(\theta_k+ \frac{\pi}{4}\right)\be_1+ \ii \sin\left(\theta_k+\frac{\pi}{4}\right)\be_4 = \frac{1}{\sqrt{1+\tan\left(\theta_k+\frac{\pi}{4}\right)^2}}(\be_1+ \ii \tan\left(\theta_k+\frac{\pi}{4}\right)\be_4).$$ Hence in the sensitivity expression, we take $r$ as $r=\ii \tan\left(\theta_k+\frac{\pi}{4}\right)$. We have  $Re(r)=0$ while for $P_4$ the degrees of vertices $1$ and $4$ are $1$. It now follows that $\frac{d^2 f}{d t^2}\big|_{\tau_k}=-2,$ independently of the value of $k$. 
}}
    
\end{ex}

\section{Conclusion and future work}\label{sec:concl}
In this work, we expand the horizon beyond quantum state transfer involving vertex states to quantum state transfer involving linear combinations of two vertex states for both the input and output states, called $s$-pair states. While there is recent work on the general pure state setting \cite{godsil2025perfect}, the results in the literature are quite broad. By considering $s$-pair states, we are able to leverage the additional structure to obtain numerous examples.

We have several examples and results of perfect $(s, r)$-state transfer from $\be_a + s\be_b$ to 
 $\be_\alpha + r\be_\beta$ where $|s|$ and $ |r|$ may differ (denoted throughout with a $\dagger$). Since the magnitudes of $r$ and $s$ represent the degrees of entanglement  in the respective states, we see that  the degree of entanglement may not be preserved under PST.  In particular, in the cases when $|s|\neq |r|$, we , if necessary, can take advantage of negative time and/or complex conjugates (as discussed in Section~\ref{sec:back})  to force the input state to have the smaller degree of entanglement. This generates a notion of entanglement boosting (that is, we can if necessary use the symmetry of the quantum state transfer dynamics to arrange it so that the degree of entanglement increases in this setting), which would be potentially physically desirable.

 Our analysis has uncovered numerous examples where $r$ and $s$ are real or purely imaginary, but limited results where $r$ and $s$ are proper complex numbers, leading to the question of the relative rarity of proper complex coefficients. For example, in Proposition~\ref{Cor:knPST}, we are afforded freedom in $s$, yet we are severely limited by the size of the graph. It would be interesting to find other examples of $s$-pair state PST in which both $Re(s)$ and $Im(s)$ are nonzero, and to investigate whether this more general setting offers some advantage (e.g. shorter PST time, better sensitivity properties, or sparser graphs ). 

 Besides introducing the concept of perfect $(s,r)$-state transfer, naturally extending the line of inquiry of vertex- and $s$-pair state transfer, allowing for perfect state transfer between $s$-pair states having differing levels of entanglement,  our contributions herein also include  optimization of state transfer and a sensitivity analysis. In the absence of perfect $(s,r)$-state transfer, our optimization algorithm produces an output vector with two nonzero entries (a bonafide $r$-pair state) that maximizes the fidelity of quantum state transfer from an initial $s$-pair state at a given time. 
 
 In our sensitivity analysis, we considered the case when vertex state transfer is necessary for perfect $(s,r)$-state transfer. 
 Example~\ref{ex:sensitivity_improvement} gave a construction where the second derivative of the fidelity function of the graph having perfect $(s,r)$-state transfer was five times smaller in magnitude than that of the corresponding graph having perfect (vertex) state transfer. In other words, the sensitivity decreased by a factor of roughly 5. However the graphs in that example are quite dense and it would be interesting to identify sparser graphs where by using the construction of Example~\ref{ex:reg} we can decrease the PST sensitivity for $s$-pair states.

\section*{Acknowledgements}
The authors gratefully acknowledge the PIMS-BIRS Teamup initiative for the opportunity for in-person collaboration, as well as  Ada Chan, Sooyeong Kim, and Xiaohong Zhang  for fruitful discussions at the early stages of the project. S.~Plosker would like to thank David Feder for his time in discussing negative time.  S.\ Kirkland was supported by NSERC grant number RGPIN-2025-05547. H.\ Monterde was supported by the University of Manitoba Faculty of Science, Faculty of Graduate Studies and the PIMS-Simons Postdoctoral Fellowship. S.\ Plosker was supported by NSERC Discovery Grant numbers RGPIN-2019-05276 and  RGPIN-2025-05704, the Canada Research Chairs Program grant number 101062, and the  Canada Foundation for Innovation grant number 43948. Some of the work in this paper was completed as part of H.\ Monterde's Ph.D. thesis 
at the University of Manitoba (in particular, Observation \ref{Thm:charPST}, Proposition~\ref{Cor:knPST}, and Theorems~\ref{Thm:kmnPSTc1} and~\ref{Thm:kmnPST1}). 

\bibliographystyle{alpha}
\bibliography{ref}
\end{document}